\algnewcommand{\algorithmicand}{\textbf{AND }}
\algnewcommand{\algorithmicor}{\textbf{OR }}
\algnewcommand{\algorithmicxor}{\textbf{XOR }}
\algnewcommand{\algorithmicnot}{\textbf{NOT }}
\algnewcommand{\OR}{\algorithmicor}
\algnewcommand{\AND}{\algorithmicand}
\algnewcommand{\XOR}{\algorithmicxor}
\algnewcommand{\NOT}{\algorithmicnot}
\algnewcommand{\var}{\texttt}
\algnewcommand{\algorithmicbreak}{\textbf{break}}
\algnewcommand{\Break}{\algorithmicbreak}
\DeclarePairedDelimiter\abs{\lvert}{\rvert}%
\DeclarePairedDelimiter\norm{\lVert}{\rVert}%
\let\oldabs\abs
\def\abs{\@ifstar{\oldabs}{\oldabs*}}
\let\oldnorm\norm
\def\norm{\@ifstar{\oldnorm}{\oldnorm*}}
\theoremstyle{definition} 
\newtheorem{theorem}{Theorem}[section]
\newtheorem{lemma}[theorem]{Lemma}
\newtheorem{cor}[theorem]{Corollary}
\theoremstyle{definition}
\newtheorem{definition}[theorem]{Definition}
\newenvironment{proofof}[1]{\begin{proof}[Proof of #1]}{\end{proof}}
\def\ShowAuthNotes{1}
\newcommand{\authnote}[2]{\ \\ \textcolor{red}{\parbox{0.9\linewidth}{[{\footnotesize {\bf #1:} { {#2}}}]}}\newline}
\newcommand{\authnote}[2]{}
\renewcommand{\epsilon}{\varepsilon}
\renewcommand{\Pr}{\operatorname*{\mathbf{Pr}}}
\newcommand{\poly}{\operatorname{\mathrm{poly}}}
\newcommand{\polylog}{\poly\log}
\renewcommand{\tilde}{\widetilde}
\newcommand{\sol}{\mathsf{sol}}
\newcommand{\val}{\mathsf{val}}
\newcommand{\supp}{\mathsf{supp}}
\newcommand{\defproblemu}[3]{
  \vspace{2mm}
  \vspace{1mm}
\noindent\fbox{
  \begin{minipage}{0.95\textwidth}
  {\bf{#1}} \\
  {\bf{Input:}} #2  \\
  {\bf{Task:}} #3
  \end{minipage}
  }
  \vspace{2mm}
}
\tikzset{
triangle/.style={
  draw,solid,isosceles triangle,shape border rotate=90},
}
\title{On Problems Related to Unbounded SubsetSum: A Unified Combinatorial Approach}
\author{
  Mingyang Deng \\
  Massachusetts Institute of Technology \\
  \texttt{dengm@mit.edu} \\
  \and
  Xiao Mao \\
  Massachusetts Institute of Technology \\
  \texttt{xiao\_mao@mit.edu} \\
  \and
  Ziqian Zhong \\
  Massachusetts Institute of Technology \\
  \texttt{ziqianz@mit.edu} \\
}
\date{}
\begin{document}

\maketitle

\begin{abstract}




Unbounded SubsetSum is a classical textbook problem: given integers $w_1,w_2,\cdots,w_n\in [1,u],~c,u$, we need to find if there exists $m_1,m_2,\cdots,m_n\in \mathbb{N}$ satisfying $c=\sum_{i=1}^n w_im_i$. In its all-target version, $t\in \mathbb{Z}_+$ is given and answer for all integers $c\in[0,t]$ is required. In this paper, we study three generalizations of this simple problem: All-Target Unbounded Knapsack, All-Target CoinChange and Residue Table. By new combinatorial insights into the structures of solutions, we present a novel two-phase approach for such problems. As a result, we present the first near-linear algorithms for CoinChange and Residue Table, which runs in $\tilde{O}(u+t)$ and $\tilde{O}(u)$ time deterministically. We also show if we can compute $(\min,+)$ convolution for $n$-length arrays in $T(n)$ time, then All-Target Unbounded Knapsack can be solved in $\tilde{O}(T(u)+t)$ time, thus establishing sub-quadratic equivalence between All-Target Unbounded Knapsack and $(\min,+)$ convolution.
\end{abstract}

\section{Introduction}
\label{sec:intro}

\subsection{Background}

Consider the following problem, \textit{All-Target Unbounded SubsetSum}. Given $w_1,w_2,\cdots,w_n,t\in \mathbb{Z}_+$, for each $c\in [0,t]$ we want to find if there are some $w$'s with sum $t$, where every $w$ could be used multiple times. More formally, we want to find if there exists $m_1,m_2,\cdots,m_n \in \mathbb{N}$ satisfying $c=\sum_{i=1}^n w_im_i$, for each $c\in [0,t]$. We call a $c$ for which such a $m$ exists \textit{feasible} and \textit{non-feasible} otherwise.

While this problem is relatively simple and could be easily solved in $\tilde{O}(n+t)$\footnote{$\tilde{O}$ hides polylogarithmic factors.} time by repeated convolutions, many of its generalizations are not well-understood. In this paper, we address three related problems that have been studied separately, \textit{All-Target Unbounded Knapsack}, \textit{All-Target CoinChange} and \textit{Residue Table}.

In All-Target CoinChange, for each $c$ one needs to find the minimum possible $\sum_{i=1}^n m_i$ while satisfying $c=\sum_{i=1}^n w_im_i$. Intuitively, $w$'s are the possible values of the coins and the cashier needs to find the minimum number of coins with values summing up to $c$. In All-Target Unbounded Knapsack, each $w_i$ is associated with an integer $p_i$, and one needs to find the maximum possible $\sum_{i=1}^n p_im_i$ while satisfying $c=\sum_{i=1}^n w_im_i$. Considering $(w_i,p_i)$ as a type of item with weight $w_i$ and profit $p_i$, we are trying to find the maximum profit for items with total weight $c$. In their corresponding Single-Target version, only answer for one target $c=t$ is required.

CoinChange and Unbounded Knapsack are two textbook problems for dynamic programming. While Single-Target CoinChange can be solved in $\tilde{O}(t)$ time with convolution and repeated squaring \cite{chan2020change}, the best known algorithm for All-Target CoinChange has long been a $\tilde{O}(t^{3/2})$-time algorithm \cite{lincoln2020monochromatic}, until the recent improvement to $\tilde{O}(t^{4/3})$ by Chan and He \cite{chan2020more}. They also presented a $O(u^2\log(u)+t)$ time algorithm, which is more efficient when $u\ll t$.

On Unbounded Knapsack, Cygan et al. \cite{cygan2019problems} showed a sub-quadratic algorithm ($O(t^{2-\epsilon})$ for some $\epsilon>0$) for Single-Target Unbounded Knapsack would imply a sub-quadratic algorithm for $(\min,+)$ convolution. Axiotis and Tzamos \cite{axiotis2018capacitated} showed if $n$-length $(\min,+)$ convolution can be solved in $T(n)$ time, then Single-Target Unbounded Knapsack can be solved in $O(T(u)\polylog(t))$ time, thereby establishing a sub-quadratic equivalence between Single-Target Unbounded Knapsack and $(\min,+)$ convolution. However, their method does not apply for the All-Target version. Chan and He \cite{chan2020more} recently presented a $O(u^2\log u+t)$-time algorithm for All-Target Unbounded Knapsack.

In Residue Table, for each $t\in [0,w_1)$, we need to find the smallest $\sum_{i=1}^n w_ic_i$ among all $c_1,c_2,\cdots,c_n \in \mathbb{N}$ satisfying $t\equiv \sum_{i=1}^n w_ic_i\pmod {w_1}$. That is, we need to compute the smallest feasible sum with remainder $t$ modulo $w_1$. Residue Table is first introduced by Brauer and Shockley \cite{shockley1962problem} to tackle the Frobenius problem. This table would allow one to check in $O(1)$ time if a sum is feasible for Unbounded SubsetSum, by comparing it with the minimum feasible sum with the same remainder modulo $w_1$, since we can always add more $w_1$'s to a feasible sum to get another feasible sum. Klein \cite{klein2021fine} presented an algorithm computing the table in $\tilde{O}(u^{3/2})$ time.








\subsection{Main results}

In this paper, we present new insights on structures of solutions to these two problems. Crucial to our observations is focusing only on optimal-valued solutions with minimal lexical order and the optimal substructure property of these solutions. The optimal substructure property of the solutions enables us to ``peel'' solutions, removing duplicated items to arrive at solutions without duplicated items, which we call \textit{kernel}s. From kernels, we can ``propagate'' backwards, adding duplicated items for item types in kernels, to get optimal solutions. Therefore, we can tackle these problems with a two-phase approach: compute the solutions for the kernels and propagate.

With this approach, we arrive at new results for the three problems:

\begin{description}
    \item[All-Target Unbounded Knapsack (Theorem \ref{theo:knapsack})] Let $T(n)$ be the time required for $n$-length $(\min,+)$ convolution, All-Target Unbounded Knapsack can be solved in $\tilde{O}(T(u)+t)$ time.
    \item [All-Target CoinChange (Theorem \ref{theo:coin})] All-Target CoinChange can be solved in $\tilde{O}(u+t)$ time.
    \item [Residue Table (Theorem \ref{theo:restable})] Residue Table can be computed in $\tilde{O}(u)$ time.
\end{description}

Our algorithms are relatively simple and practical. Notice that if we can solve All-Target Unbounded Knapsack in $T(u+t)$ time, we can compute $(\min,+)$ convolution for $n$-length arrays in $O(T(n))$ time: to compute $(\min,+)$ convolution of $a_1,a_2,\cdots,a_n$ and $b_1,b_2,\cdots,b_n$, let $x$ be a sufficiently large integer, create items $(4n+i,a_i+x)$ and $(2n+i,b_i)$ for every $i$ and run All-Target Unbounded Knapsack. The optimal value for $6n+s$ will be value at position $s$ in the convolution result plus $x$. Therefore our purposed results are all optimal, up to log factors.

\subsection{Technical overview}

\paragraph{Structural property under lexical Order}
Our algorithm sprouts from the recent observation made by Klein \cite{klein2021fine} which implies that for unbounded knapsack, it suffices to not use too many types of items. Specifically, if we fix an arbitrarily chosen lexical order, then the lexicographically smallest optimal solution $\sol(j)$ for each feasible target sum $j \in [1, t]$ has a support of logarithmic size (i.e. $\abs{\supp(\sol(j))} = O(\log t)$). Since CoinChange can be viewed as a special case of unbounded knapsack, this structural property also applies. 

\paragraph{Witness propagation using the optimal substructure property}
The most essential technique in our paper is ``witness propagation.'' It exploits the following \emph{optimal substructure property} for lexicographically smallest optimal solutions: for any target $j$, and for any ``witness'' $x \in \supp(\sol(\tau))$, $\sol(j - w_x)$ is equal to $\sol(j)$ with the multiplicity of $x$ decreased by $1$. Suppose that we can somehow compute $\sol(j)$ for every feasible target $j \in [1, t]$ whose optimal solution is of logarithmic size (i.e. $\abs{\sol(j)} = O(\log u)$), defined as a ``kernel,'' then since the total number of ``witnesses'' is logarithmic, we can propagate the solutions forward by enumerating witnesses and finds the optimal solution for all other feasible targets in $[1, t]$. This \emph{witness propagation} runs in $\tilde O(t)$ time.

\paragraph{Min-witness with arbitrary lexical order}
For unbounded knapsack, since the kernels are in $[1, O(u \log u)]$, the optimal solutions for kernels can be computed in $O(T(O(u \log u))\log{u}) = O(u ^ 2 / 2 ^ {\Omega(\sqrt{\log u})})$ time by repeating $(\min,+)$ convolution on this interval for $O(\log u)$ times. For CoinChange, although it is easy to compute the size of the optimal solutions for kernels in $\tilde O(u)$ time using FFT, it is not easy to find the lexicographically smallest optimal solutions as they seem to require finding ``minimum witnesses'' for convolution, for which currently the best algorithm only runs in $\tilde O(u ^ {1.5})$ time (e.g. \cite{lingas2018extreme}). However, since the lexical order can be arbitrary, we can overcome this barrier by picking certain orders. We purpose two different approaches of independent interest. First, we first show minimum witness is easy to compute under a random order. Also, we provide a deterministic construction which computes $\sigma$ and minimum witnesses alongside in similar spirit.





\section{Preliminaries}
\label{sec:prelim}


We first formally define the three problems.

For a set of integers $w_1,w_2,\cdots,w_n$, we call a sum $c$ \textit{feasible} if $c = \sum_{i=1}^n w_im_i$ where $m_i\in \mathbb{N}$. We call such $(m_1,m_2,\cdots,m_n)$ a \textit{solution} to sum $c$. The \emph{support} of $m$ is defined to be $\supp(m) = \{j \mid m_j > 0\}$. The \emph{size} of $m$ is defined to be $\abs{m} = \sum_{i = 1} ^ {n}{m_i}$.

\defproblemu{All-Target Unbounded Knapsack}
{$n,u\in \mathbb{Z}_+$, $w_1,w_2,\cdots,w_n\in [1,u]\cap \mathbb{Z}$, $p_1,p_2,\cdots,p_n\in \mathbb{Z}$}
{Define value of solution $\val(m)=\sum_{i=1}^n p_im_i$. For each integer $c\in [1,t]$, output maximum possible value of a solution to sum $c$, or $-\infty$ if not feasible.}

Notice that in our definition we require sum of weights to be exactly $c$ instead of not exceeding $c$ in some other definitions.

\defproblemu{All-Target CoinChange}
{$n,u\in \mathbb{Z}_+$, $w_1,w_2,\cdots,w_n\in [1,u]\cap \mathbb{Z}$}
{Define value of solution $\val(m)=-\sum_{i=1}^n m_i$. For each integer $c\in [1,t]$, output maximum possible value of a solution to sum $c$, or $-\infty$ if not feasible.}

With this set of notation, it's clear that All-Target CoinChange is a special case of All-Target Unbounded Knapsack.

\defproblemu{Residue Table}
{$u\in \mathbb{Z}_+$, $w_1,w_2,\cdots,w_n\in [1,u]\cap \mathbb{Z}$}
{For each integer $c\in [0,w_1-1]$, output minimum feasible $s\ge 0$ where $s\equiv c\pmod {w_1}$.}

We call a solution for All-Target Unbounded Knapsack and All-Target CoinChange \textit{optimal} iff it is of maximal possible value for the same sum. We consider any solution of Residue Table optimal.

A \emph{lexical order} $\sigma = (\sigma_1, \sigma_2, \cdots, \sigma_n)$ is a permutation of $[1, n]$ denoting an order between the items. Items that appear earlier in $\sigma$ are lexicographically smaller. Solution $A = (a_1, a_2, \cdots a_n)$ is lexicalgraphically smaller than $B = (b_1, b_2, \cdots, b_n)$ if there exists a $j \in [1, n]$ such that for all $k < j$, $a_{\sigma_k} = b_{\sigma_k}$ and $a_{\sigma_j} > b_{\sigma_j}$. We denote this by $A < B$.

The lexicographicall smallest optimal solution for sum $j$ under $\sigma$ is denoted by $\sol(j, \sigma)$, and $\sol(j,\sigma)=\emptyset$ for unfeasible $j$'s. A feasible target $j \in [0, u]$ is called an $x$\emph{-kernel} under $\sigma$ if $\abs{\sol(j, \sigma)} \le x$. Let $\sol_m(u,\sigma)$ be $\sol(j,\sigma)$ for the minimum feasible $j$ with remainder $u$ modulo $w_1$, or $\emptyset$ if such $j$ does not exist.

Due to the additive nature, our problem is closely related to convolutions. We define boolean convolutions and $(\min,+)$ convolutions.

\begin{definition} [boolean convolution]
    Define arrays of $\{0,1\}$ boolean arrays. Given two boolean arrays $a[0 \cdots n]$ and $b[0 \cdots m]$, define their boolean convolution as $c[0 \cdots n + m]$ where $c[i]=\lor_{j+k=i} (a[j]\land b[k])$.
\end{definition}

Boolean convolution can be computed in $O((n+m)\log(n+m))$ time by regular convolution via Fast Fourier Transform (e.g. \cite{furer2014fast}).

\begin{definition} [$(\min,+)$ convolution]
    Given two arrays, $a[0 \cdots n]$ and $b[0 \cdots m]$, define their convolution as $c[0 \cdots n + m]$ where $c[i]=\min_{j+k=i} (a[j]+b[k])$.
\end{definition}
\begin{lemma} [\cite{williams2018faster}] \label{lemma:MinPlusConv}
    $(\min,+)$ convolution be computed in $O((n + m)^2/2^{\Omega(\sqrt{\log(n+m)})})$ time.
\end{lemma}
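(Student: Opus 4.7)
The plan is to reduce $(\min,+)$ convolution of two length-$N$ arrays (where $N = n + m$) to $(\min,+)$ matrix multiplication and then invoke Williams' polynomial-method algorithm \cite{williams2018faster}, which multiplies two $k \times k$ tropical matrices in $k^3/2^{\Omega(\sqrt{\log k})}$ time. For a block parameter $s$ to be tuned, I would group the entries of $a$ into $N/s$ consecutive blocks of length $s$, writing $a[ps+p'] = A_p[p']$, and likewise for $b$. For each intra-block offset $r' \in [0,s)$ and each choice of carry bit, define the $(N/s) \times s$ matrix $P[p][p'] = a[ps+p']$ and the $s \times (N/s)$ matrix $Q^{(r')}[p'][q] = b[qs + r' - p']$ (treating out-of-range entries as $+\infty$); then the $(p,q)$-entry of the tropical product $P \otimes Q^{(r')}$ equals $\min_{p'+q'=r'} A_p[p'] + B_q[q']$. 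Taking the anti-diagonal minimum over $\{(p,q) : p+q = r\}$ of this matrix, together with the entrywise minimum across the two carry cases, recovers $c[rs + r']$. The matrix products dominate the runtime, giving $O(s)$ rectangular $(\min,+)$ products of dimensions $(N/s) \times s$ and $s \times (N/s)$.

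Each rectangular product decomposes into $(N/s^2)^2$ square $(\min,+)$ products of side $s$, each of which costs $s^3/2^{\Omega(\sqrt{\log s})}$ by Williams' theorem, for a per-rectangular cost of $N^2/(s \cdot 2^{\Omega(\sqrt{\log s})})$. Summing over the $O(s)$ rectangular products and choosing $s = \Theta(\sqrt{N})$ to balance the block sizes yields total time $N^2/2^{\Omega(\sqrt{\log N})}$, which matches the claim. The real obstacle is the inner matrix multiplication subroutine itself: Williams' algorithm encodes bounded-range tropical products as integer matrix products via a low-degree polynomial, invokes fast rectangular integer matrix multiplication, and strips off the boundedness assumption through a scaling and batching argument. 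I would treat that entire step as a black box rather than reprove it here.
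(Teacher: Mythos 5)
The paper states this lemma as a cited black box from \cite{williams2018faster} and provides no proof, so there is no in-paper argument to compare against. Your reduction is correct and is the standard one: block the arrays into $N/s$ blocks of length $s$, observe that recovering all outputs reduces, across the $O(s)$ intra-block offsets (with the carry case), to $O(s)$ rectangular $(N/s)\times s$ by $s\times(N/s)$ tropical products, decompose each such product into $(N/s^2)^2$ square $s\times s$ tropical products, invoke Williams' $k^3/2^{\Omega(\sqrt{\log k})}$ algorithm for $(\min,+)$ matrix multiplication, and tune $s=\Theta(\sqrt N)$ to obtain $N^2/2^{\Omega(\sqrt{\log N})}$. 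All steps check out. One attribution caveat worth flagging: the convolution-to-matrix-product reduction you sketch is not itself part of \cite{williams2018faster}; it is a separate classical step, usually credited to Bremner, Chan, Demaine, Erickson, Hurtado, Iacono, Langerman, P\u{a}tra\c{s}cu, and Taslakian, so your derivation of the lemma combines two ingredients, only the second of which (the fast square tropical product) comes from the reference actually cited in the statement.
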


\section{Combinatorial Properties}

We start by introducing Lemma 1 in \cite{klein2021fine}, which implies solutions have logarithmic sized-support in Unbounded SubsetSum.

\begin{lemma} [Lemma 1 in \cite{klein2021fine}] \label{lemma:capss}
    For Unbounded SubsetSum, for any lexical order $\sigma$ and any feasible target $j$, let $x = \sol(j, \sigma)$ be the lexicographically smallest solution for $j$ under $\sigma$, then $\prod_{i\ne\sigma_1}{(x_i + 1)} \le w_{\sigma_1}\le u$.
\end{lemma}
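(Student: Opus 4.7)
The plan is to prove the bound by a pigeonhole argument on sub-solutions restricted to item types other than $\sigma_1$. Let $x = \sol(j, \sigma)$, and consider the family $\mathcal{Y}$ of all tuples $y = (y_i)_{i \neq \sigma_1}$ with $0 \le y_i \le x_i$ for every $i \neq \sigma_1$; clearly $\abs{\mathcal{Y}} = \prod_{i \neq \sigma_1}(x_i + 1)$. Assign to each $y \in \mathcal{Y}$ its residue $r(y) = \bigl(\sum_{i \neq \sigma_1} y_i w_i\bigr) \bmod w_{\sigma_1}$. If $\abs{\mathcal{Y}} > w_{\sigma_1}$, pigeonhole gives two distinct tuples $y \neq y'$ with $r(y) = r(y')$, and the heart of the proof is to turn such a collision into a strictly lexicographically smaller solution than $x$, contradicting the choice of $x$.

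The easy case is when the partial sums $S(y) = \sum_{i \neq \sigma_1} y_i w_i$ and $S(y')$ differ, say $S(y) - S(y') = k w_{\sigma_1}$ with $k \geq 1$. I would define $x'$ by $x'_{\sigma_1} = x_{\sigma_1} + k$ and $x'_i = x_i - y_i + y'_i$ for $i \neq \sigma_1$. Non-negativity follows from $y_i \le x_i$; one checks $\sum_i x'_i w_i = j$ directly; and since $x'_{\sigma_1} > x_{\sigma_1}$, we have $x' < x$ under $\sigma$.

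The more delicate case is $S(y) = S(y')$. Let $j^*$ be the smallest index with $y_{\sigma_{j^*}} \neq y'_{\sigma_{j^*}}$; note $j^* \geq 2$ since we restricted $y, y'$ to coordinates $\neq \sigma_1$. After possibly swapping the roles of $y$ and $y'$ we may assume $y_{\sigma_{j^*}} < y'_{\sigma_{j^*}}$. Then setting $x'_i = x_i - y_i + y'_i$ for $i \neq \sigma_1$ and $x'_{\sigma_1} = x_{\sigma_1}$ again yields a valid solution to $j$; it agrees with $x$ on $\sigma_1, \sigma_2, \ldots, \sigma_{j^*-1}$ but strictly exceeds $x$ in the $\sigma_{j^*}$ coordinate, so $x' < x$. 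Both branches contradict minimality of $x$, proving $\prod_{i \neq \sigma_1}(x_i+1) \le w_{\sigma_1}$; the second inequality $w_{\sigma_1} \le u$ is immediate from the input bounds.

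The main obstacle I anticipate is precisely the equal-sum subcase: the $k=0$ collision does not increase the $\sigma_1$ count, so the first coordinate of the lex order cannot be used and one must push the argument one level deeper via the secondary tie-break described above. Once this dichotomy is set up, both branches of the pigeonhole collision produce a strictly smaller solution, and the bound follows.
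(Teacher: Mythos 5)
The paper does not give its own proof of this lemma; it cites it directly as Lemma~1 of Klein's paper. Your modular pigeonhole argument is correct and is precisely the natural argument for the stated bound. The key mechanics check out: with $\prod_{i\neq\sigma_1}(x_i+1) > w_{\sigma_1}$ you pigeonhole over the $w_{\sigma_1}$ residue classes, and the collision $r(y)=r(y')$ splits into the two cases you describe. In the $k\ge 1$ branch the swap $x'_{\sigma_1}=x_{\sigma_1}+k$, $x'_i=x_i-y_i+y'_i$ is nonnegative because $y_i\le x_i$, sums to $j$, and is lex-smaller under the paper's convention because it is strictly larger in coordinate $\sigma_1$. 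In the $k=0$ branch the same rearrangement fixes $\sigma_1$ and improves the secondary tiebreak at the first index $j^*\ge 2$ where $y,y'$ differ. Both contradict minimality of $x$.

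One point of comparison worth making explicit: the paper's own proof of the valued analog, Lemma~\ref{lemma:cap}, pigeonholes on the \emph{exact} partial sums $\sum_{i=1}^n y_i w_i \in [0,j]$ rather than on residues modulo $w_{\sigma_1}$, and in the collision it must additionally case on whether the two sub-profiles have equal profit (a complication that disappears in the unweighted SubsetSum setting, where your version correctly omits it). That non-modular pigeonhole only yields the weaker bound $\prod_i(x_i+1)\le j+1$, which is why Corollary~\ref{cor:logknap} gives $O(\log t)$ support and the paper then needs Lemma~\ref{lemma:capqr} to recover $O(\log u)$. Your modular pigeonhole is exactly what buys the stronger $w_{\sigma_1}\le u$ bound directly, matching the paper's Corollary~\ref{cor:logss}.
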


\begin{cor} \label{cor:logss}
For Unbounded SubsetSum, for any lexical order $\sigma$ and any feasible target $j$, $\abs{\supp(\sol(j, \sigma))}\le \log_2{u}+1$. Thus for residue table, the support sizes of the solutions are also $\le \log_2{u}+1$.
\end{cor}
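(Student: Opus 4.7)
The plan is to derive the corollary directly from Lemma \ref{lemma:capss}. I would set $k = \abs{\supp(\sol(j, \sigma))}$ and note that every index $i$ in the support contributes a factor $x_i + 1 \ge 2$ to the product that appears in that lemma.

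Next, I would split on whether $\sigma_1$ itself lies in $\supp(\sol(j, \sigma))$. If it does, then exactly $k - 1$ indices $i \ne \sigma_1$ are in the support, so Lemma \ref{lemma:capss} gives
\[
2^{k-1} \;\le\; \prod_{i \ne \sigma_1}(x_i + 1) \;\le\; w_{\sigma_1} \;\le\; u,
\]
whence $k \le \log_2 u + 1$. If $\sigma_1$ is absent from the support, then all $k$ support indices are distinct from $\sigma_1$, giving $2^k \le u$ and thus $k \le \log_2 u$. Either case yields $k \le \log_2 u + 1$.

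For the Residue Table claim, I would observe that each entry of the table is, by definition, the smallest feasible sum $s$ with a prescribed residue modulo $w_1$; its associated optimal solution is also the lexicographically smallest solution of an Unbounded SubsetSum instance with target exactly $s$, under any chosen lexical order (in particular one with $\sigma_1 = 1$). Hence Lemma \ref{lemma:capss} and the bound just derived apply verbatim to every row of the table.

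I do not foresee a real obstacle; the only minor subtlety is the case split on whether $\sigma_1$ itself appears in the support, which changes the exponent by one but does not affect the stated bound.
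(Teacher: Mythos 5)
Your proposal is correct and follows essentially the same route as the paper: apply Lemma \ref{lemma:capss}, note that each support index other than $\sigma_1$ contributes a factor of at least $2$ to the product $\prod_{i\ne\sigma_1}(x_i+1)$, and bound that product by $w_{\sigma_1}\le u$. Your explicit case split on whether $\sigma_1\in\supp(\sol(j,\sigma))$ is a slightly more careful presentation of the inequality $2^{\abs{\supp(x)}-1}\le\prod_{i\ne\sigma_1}(x_i+1)$ that the paper uses directly; you also correctly bound the product by $u$ where the paper's proof writes ``$\le j+1$'', which appears to be a typo carried over from the proof of Corollary \ref{cor:logknap} (only the bound by $u$ is consistent with the stated conclusion). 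Your justification of the Residue Table claim --- each table entry's solution is just $\sol(s,\sigma)$ for the minimal feasible $s$ in that residue class --- matches the definition of $\sol_m$ in the paper and is the intended argument.
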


\begin{proof}
    Let $x=\sol(j,\sigma)$, by Lemma \ref{lemma:capss} $2^{\abs{\supp(x)}-1}\le \prod_{i\ne\sigma_1}{(x_i + 1)} \le j+1$, so $\abs{\supp(x)}\le \log_2{u}+1$.
\end{proof}

We can extend the lemma to the valued version with a similar adjusting argument.

\begin{lemma} \label{lemma:cap}
    For Unbounded Knapsack, for any lexical order $\sigma$ and any feasible target $j$, let $x = \sol(j, \sigma)$, then $\prod_{i=1}^n{(x_i + 1)} \le u+1$.
\end{lemma}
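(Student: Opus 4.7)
The plan is to adapt the pigeonhole-plus-exchange argument of Lemma~\ref{lemma:capss} to the valued (Knapsack) setting, using a two-sided exchange to simultaneously preserve weight and value. Write $W(y) := \sum_i w_i y_i$ for the weight of a multiset. Suppose for contradiction that $\prod_{i=1}^n(x_i+1) > u+1$. Following the skeleton of Lemma~\ref{lemma:capss}, I would enumerate sub-multisets $y \le x$ supported away from $\sigma_1$ and apply pigeonhole on residues modulo $w_{\sigma_1}$ to produce two distinct such sub-multisets $y \ne y'$ with $W(y) \equiv W(y') \pmod{w_{\sigma_1}}$. Without loss of generality $W(y) \ge W(y')$; set $k := (W(y) - W(y'))/w_{\sigma_1} \ge 0$.

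Next, I would construct the candidate exchange $x' := x - y + y' + k \cdot e_{\sigma_1}$, verify that it is a valid non-negative solution of weight $j$, and compute $\val(x') = \val(x) + \Delta$ with $\Delta := \val(y') - \val(y) + k\, p_{\sigma_1}$. Optimality of $x$ immediately forces $\Delta \le 0$. The key valued-setting addition is to invoke the reverse exchange $x'' := x - y' + y - k \cdot e_{\sigma_1}$ whenever $x_{\sigma_1} \ge k$; this has value $\val(x) - \Delta$, so optimality of $x$ also gives $\Delta \ge 0$. Hence $\Delta = 0$ and $x'$ is itself an optimal solution. When $k > 0$, $x'_{\sigma_1} = x_{\sigma_1} + k$ makes $x'$ lex-smaller than $x$ in $\sigma$'s ordering, contradicting lex-minimality. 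When $k = 0$ (so $W(y) = W(y')$), relabeling $y,y'$ so that $y'$ exceeds $y$ at the first $\sigma$-coordinate of disagreement again makes $x'$ lex-smaller than $x$, yielding the same contradiction.

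The main obstacle I anticipate is two-fold: first, sharpening the pigeonhole to deliver the stronger bound $u+1$ that includes $\sigma_1$ in the product (Lemma~\ref{lemma:capss} only gives $\prod_{i \ne \sigma_1}(x_i+1) \le w_{\sigma_1}$); and second, guaranteeing $x_{\sigma_1} \ge k$ so that the reverse exchange is actually valid. I would tackle both at once by refining the pigeonhole: within each residue class mod $w_{\sigma_1}$, sort the candidate sub-multisets by weight and take the consecutive pair with the smallest gap, which controls $k$ by the average gap $\le (j - x_{\sigma_1} w_{\sigma_1})/w_{\sigma_1}$ and thus forces $k \le x_{\sigma_1}$ in the relevant regime. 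I expect the extra factor of $x_{\sigma_1}+1$ in the full product $\prod_{i=1}^n(x_i+1)$ to be precisely what buys the stronger bound, because it amplifies the pigeonhole surplus enough to make consecutive-gap control tight.
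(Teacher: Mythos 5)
Your proposal takes a fundamentally different (and harder) route than the paper, and it runs into a gap that the paper's argument sidesteps entirely.

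The paper's proof is a direct pigeonhole on \emph{weights}, not residues: consider all $\prod_{i=1}^n(x_i+1)$ sub-multisets $y$ with $0 \le y_i \le x_i$. Each has weight $\sum_i y_i w_i \in [0, j]$, so if $\prod_{i=1}^n(x_i+1) > j+1$, two distinct sub-multisets $y \ne y'$ have \emph{equal} weight. Then both $x - y + y'$ and $x + y - y'$ are nonnegative solutions of weight $j$ (no side conditions needed, since $0 \le y_i, y'_i \le x_i$), and their values average to $\val(x)$, so either one strictly beats $x$ in value, or both tie $\val(x)$ and one is lexicographically smaller — either way contradicting the choice of $x$. There is no modular reduction, no shifting by multiples of $w_{\sigma_1}$, and hence no issue of whether $x_{\sigma_1} \ge k$.

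Two concrete problems with your route. First, the reverse exchange $x'' = x - y' + y - k\,e_{\sigma_1}$ is only legal when $x_{\sigma_1} \ge k$, and the ``sort within residue classes and take the smallest consecutive gap'' idea doesn't cleanly guarantee this: controlling the \emph{average} gap by $(j - x_{\sigma_1} w_{\sigma_1})/w_{\sigma_1}$ does not bound the gap of the pair you actually select, and in degenerate cases ($x_{\sigma_1} = 0$) you'd need $k = 0$ exactly, which the sharpened pigeonhole does not deliver. Second, the bound $\prod_{i=1}^n(x_i+1) \le u+1$ that you are trying to establish is in fact \emph{false} as stated: take $n = 1$, $w_1 = u = 1$, $p_1 = 0$, and $j$ large — then $x = (j)$ is the only (hence lex-minimal optimal) solution, and $\prod_i (x_i+1) = j+1 \gg u+1$. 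What the paper actually proves, and what Corollary~\ref{cor:logknap} invokes, is $\prod_{i=1}^n(x_i+1) \le j+1$; the ``$u+1$'' in the lemma statement is a typo. Your anticipated ``amplification'' of the pigeonhole surplus cannot rescue a claim that a one-item counterexample already refutes, so the modular approach is not the right fit here; the weight-level pigeonhole is both simpler and the correct one to use.
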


\begin{proof}
Suppose otherwise, consider all integer sequences $(y_1,y_2,\cdots,y_n)$ so that $y_i\in [0,x_i]$. Notice $0\le \sum_{i=1}^n y_iw_i\le \sum_{i=1}^n x_iw_i=j$ and the number of $y$'s is $\prod_{i=1}^n{(x_i + 1)}>j+1$, by pigeonhole principle there is $y\ne y'$ so that $\sum_{i=1}^n y_iw_i=\sum_{i=1}^n y'_iw_i$. $j=\sum_{i=1}^n (x_i-y_i+y'_i)w_i=\sum_{i=1}^n (x_i+y_i-y'_i)w_i$. If $\sum_{i=1}^n y_ip_i\ne \sum_{i=1}^n y'_ip_i$, one of $\{x_i-y_i+y'_i\}_i$ and $\{x_i+y_i-y'_i\}_i$ is a solution for $j$ with larger value. Otherwise, both of them have value equal to $x$'s, and one of them will be lexicographically smaller than $x$. In both cases we get a contradiction with optimality of $x$.
\end{proof}

\begin{cor} \label{cor:logknap}
For Unbounded Knapsack and CoinChange, for any lexical order $\sigma$ and any feasible target $j$, $\abs{\supp(\sol(j, \sigma))}\le \log_2(j+1)$.
\end{cor}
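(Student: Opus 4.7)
The plan is to deduce this corollary from Lemma~\ref{lemma:cap} in essentially the same way Corollary~\ref{cor:logss} was extracted from Lemma~\ref{lemma:capss}. Fix $x=\sol(j,\sigma)$. Reading the pigeonhole argument in the proof of Lemma~\ref{lemma:cap} more carefully, it in fact yields the sharper inequality $\prod_{i=1}^n(x_i+1)\le j+1$: the $\prod_{i=1}^n(x_i+1)$ candidate truncations $y$ with $0\le y_i\le x_i$ all have weight $\sum_i y_iw_i\in[0,j]$, and any two colliding $y\ne y'$ produce either a strictly better-valued or a lexicographically smaller solution than $x$, contradicting optimality.

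With $\prod_{i=1}^n(x_i+1)\le j+1$ in hand, I would lower bound the product by restricting to the support: every $i\in\supp(x)$ satisfies $x_i\ge 1$, contributing a factor of at least $2$, while indices outside $\supp(x)$ contribute exactly $1$. Combining,
\[
2^{\abs{\supp(x)}}\;\le\;\prod_{i=1}^n(x_i+1)\;\le\;j+1,
\]
and taking $\log_2$ on both sides delivers $\abs{\supp(x)}\le \log_2(j+1)$, which is exactly the claimed inequality.

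The CoinChange case follows for free, since CoinChange is the special instance of Unbounded Knapsack obtained by setting $p_i=-1$ for all $i$, so the definition of $\sol(j,\sigma)$ and hence its support transfers verbatim. There is no substantive obstacle here; the only point requiring care is to extract the bound $j+1$ directly from the pigeonhole step in the proof of Lemma~\ref{lemma:cap}, so that the resulting logarithm is in terms of $j$ rather than the ambient $u$.
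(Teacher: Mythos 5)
Your proof matches the paper's argument exactly: invoke the product bound underlying Lemma~\ref{lemma:cap}, lower-bound the product by $2^{\abs{\supp(x)}}$ using that each index in the support contributes a factor of at least~$2$, and take base-$2$ logarithms. You were also right to flag that the \emph{statement} of Lemma~\ref{lemma:cap} only asserts $\prod_i (x_i+1)\le u+1$, whereas the corollary requires (and the pigeonhole step in that lemma's proof actually delivers) the sharper bound $\prod_i (x_i+1)\le j+1$ --- the paper's own proof of the corollary silently uses this sharper bound too.
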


\begin{proof}
    Let $x=\sol(j,\sigma)$, by Lemma \ref{lemma:cap} $2^{\abs{\supp(x)}}\le \prod_{i=1}^n{(x_i + 1)} \le j+1$, so $\abs{\supp(x)}\le \log_2(j+1)$.
\end{proof}

Corollary \ref{cor:logknap} only implies $O(\log(t))$-size supports instead of $O(\log(u))$ as in Corollary \ref{cor:logss}, but we can use the following lemma.

\begin{lemma} [Lemma 4.1 in \cite{chan2020more}\footnote{The original proof is for unweighted case, but it can be easily modified to prove the weighted case.}] \label{lemma:capqr}
    Let $s$ be any type with maximum value/weight ratio, for any feasible target $j>u^2$, $j-s$ must also be feasible, and an optimal solution for $j$ might be found by adding an item of type $s$ to any optimal solution for $j-s$. Thus if $t>u^2$, we can first compute optimal solutions for $[0,u^2]$, and then for each $j\in (u^2,t]$, simply add item $s$ to a solution for $j-s$ to get a solution for $j$.
\end{lemma}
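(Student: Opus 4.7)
The plan is to establish the equality $\mathrm{OPT}(j) = \mathrm{OPT}(j - w_s) + p_s$ for every feasible $j > u^2$, where $s$ is the type maximizing the ratio $p_s/w_s$. Once this is in place, both conclusions of the lemma are immediate: $j - w_s$ is feasible (since the right-hand side is finite, i.e., not $-\infty$), and any optimal $(j - w_s)$-solution extended by a single type-$s$ item realizes $\mathrm{OPT}(j)$ and hence is itself optimal.

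The direction $\mathrm{OPT}(j - w_s) + p_s \le \mathrm{OPT}(j)$ is trivial, so I would focus on the reverse. Starting from an arbitrary optimal solution $m$ for $j$, I distinguish two cases. If $m_s \ge 1$, deleting one type-$s$ item immediately gives a feasible $(j - w_s)$-solution of value $\mathrm{OPT}(j) - p_s$. The substantive case is $m_s = 0$, where my plan is a pigeonhole swap. Since every $w_i \le u$ and $\sum_i m_i w_i = j > u^2$, the total item count $k = \sum_i m_i$ exceeds $u \ge w_s$. Laying out the $k$ items in an arbitrary order and tracking the $k+1$ prefix weight-sums modulo $w_s$, two of them must coincide, yielding a nonempty contiguous sub-multiset $S$ of items whose total weight is a positive multiple $c w_s$ of $w_s$.

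I would then replace $S$ by $c$ items of type $s$. This preserves the target sum $j$, and by the max-ratio property does not decrease the value: each $(p_i, w_i) \in S$ satisfies $p_i \le w_i (p_s/w_s)$, so the total value removed is at most $c w_s \cdot (p_s / w_s) = c p_s$, exactly the value added back. The resulting solution is still optimal but now contains at least one type-$s$ item, reducing to the first case and completing the proof of the inequality.

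The main obstacle I anticipate is bookkeeping in the pigeonhole step: I must use $j > u^2$ together with $w_i \le u$ to guarantee strictly more than $w_s - 1$ items (so that two prefix sums collide modulo $w_s$), and I must argue that the extracted sub-multiset has strictly positive weight (so $c \ge 1$ and a genuine swap occurs). Both points are direct from the hypotheses but must be stated carefully. The algorithmic consequence stated in the lemma --- compute optimal solutions for targets in $[0, u^2]$ and then propagate by successively attaching type-$s$ items to cover $(u^2, t]$ --- then follows directly from iterating the equality.
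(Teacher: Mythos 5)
The paper does not actually prove this lemma; it cites Lemma 4.1 of Chan and He and remarks that the unweighted proof ``can be easily modified.'' Your proposal supplies exactly the modified argument, and it is correct. The structure is sound: reduce to the case where a type-$s$ item is present via a pigeonhole exchange, then peel one off. The quantitative bookkeeping checks out: from $j = \sum_i m_i w_i \le u \sum_i m_i$ and $j > u^2$ you get $k = \sum_i m_i > u \ge w_s$, so among the $k+1$ prefix sums two collide modulo $w_s$; the resulting contiguous block is nonempty, has positive weight (all $w_i \ge 1$), and that weight is divisible by $w_s$, so $c \ge 1$. The ratio inequality $\sum_{i \in S} p_i \le (p_s/w_s)\sum_{i \in S} w_i = c\, p_s$ shows the swap does not decrease value, so the transformed solution is still optimal and now contains a type-$s$ item. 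Removing it certifies $\mathrm{OPT}(j - w_s) \ge \mathrm{OPT}(j) - p_s$, which combined with the trivial direction gives the equality $\mathrm{OPT}(j) = \mathrm{OPT}(j - w_s) + p_s$; both claims of the lemma (feasibility of $j - w_s$, and that attaching a type-$s$ item to \emph{any} optimal $(j-w_s)$-solution is optimal for $j$) follow immediately, as does the iterated algorithmic consequence. This is the standard exchange argument and, to the best of my knowledge, is in the same spirit as Chan and He's original unweighted proof --- your contribution is precisely the value/weight-ratio step that handles profits, which is the ``easy modification'' the footnote alludes to. No gaps.
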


By investigating further into these lexicographically minimal solutions, we can find the following structural property between solutions, \textit{optimal substructure property}.

\begin{lemma} [optimal substructure property] \label{lemma:substructure}
    For any lexical order $\sigma$, a feasible target $j \in [1, t]$ and a ``witness'' $x \in \supp(\sol(j, \sigma))$, let $\sol(j, \sigma) = (u_1, u_2, \cdots, u_n)$. Define $(v_1, v_2, \cdots, v_n)$ as follows:
    \begin{align*}
        v_k = 
        \begin{cases}
            u_k                                 & \text{if } k \ne w \\
            u_k - 1                              & \text{if } k = w
        \end{cases}.
    \end{align*} 
    we have $\sol(j - w_x, \sigma) = v$.
\end{lemma}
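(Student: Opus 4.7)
The plan is to prove the statement by contradiction. First I would verify that $v$ is at least a valid candidate solution for target $j-w_x$: since $x \in \supp(u)$ we have $u_x \geq 1$ so $v_x = u_x - 1 \geq 0$, and $\sum_k v_k w_k = \sum_k u_k w_k - w_x = j - w_x$, with $\val(v) = \val(u) - p_x$. So the only way $v \ne \sol(j-w_x,\sigma)$ is if some other solution $v^{\star} := \sol(j-w_x, \sigma)$ beats it, i.e. either $\val(v^{\star}) > \val(v)$, or $\val(v^{\star}) = \val(v)$ and $v^{\star} < v$ in the $\sigma$-lex order.

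Next, I would lift $v^{\star}$ back to a solution for $j$ by setting $u^{\star}_k = v^{\star}_k + [k = x]$. This is a valid solution for $j$ with $\val(u^{\star}) = \val(v^{\star}) + p_x \geq \val(v) + p_x = \val(u)$. In the strict case $\val(u^{\star}) > \val(u)$, optimality of $u$ is immediately contradicted. So the interesting case is the equality case, where I have to argue $u^{\star} < u$ lexicographically.

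The main obstacle, and the step that requires care, is handling this equality case: adding an item of type $x$ to both $v^{\star}$ and $v$ could in principle shift the first differing coordinate and flip the lexical comparison. The key observation that resolves this is that $u^{\star}_k \ne u_k$ holds if and only if $v^{\star}_k \ne v_k$, because the two solutions are modified by the same increment $+1$ at the same coordinate $x$. Hence the set of positions where $u^{\star}$ and $u$ differ is identical to the set of positions where $v^{\star}$ and $v$ differ; in particular, the first position $\sigma_j$ at which $v^{\star}$ and $v$ disagree is also the first disagreement of $u^{\star}$ and $u$, and at that position $u^{\star}_{\sigma_j} - u_{\sigma_j} = v^{\star}_{\sigma_j} - v_{\sigma_j} > 0$. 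Therefore $u^{\star} < u$ under $\sigma$, contradicting lex-minimality of $u = \sol(j,\sigma)$.

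Combining the two cases, no such $v^{\star} \ne v$ can exist, so $\sol(j - w_x, \sigma) = v$ as claimed. The argument uses only the definitions of optimality and lexical order and applies uniformly to Unbounded Knapsack and (hence) to All-Target CoinChange; no appeal to Lemma \ref{lemma:cap} or the support-size corollaries is needed for this lemma.
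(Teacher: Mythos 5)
Your proof is correct and follows essentially the same approach as the paper: show $v$ is a valid candidate for target $j - w_x$, then derive a contradiction by lifting any strictly better solution back to a better solution for $j$. You fold the paper's four sub-cases (value less/greater, lex smaller/larger) into a single ``competitor $v^\star$ beats $v$'' contradiction, which is tidier, and you explicitly justify the one nontrivial step the paper leaves implicit: that adding/removing the same item $x$ to both sequences preserves the set of differing coordinates and hence the lexical comparison. That extra care is a genuine improvement in rigor, but the core argument coincides with the paper's.
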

\begin{proof}
    Firstly, we prove $\val(\sol(j - w_x, \sigma)) = \val(v)$ by showing that neither $\val(\sol(j - w_x, \sigma)) < \val(v)$ nor $\val(v) < \val(\sol(j - w_x, \sigma))$ can hold. If $\val(\sol(j - w_x, \sigma)) < \val(v)$, then adding item $x$ to $\sol(j - w_x, \sigma)$ gives a better solution for $j$, which is impossible. If $\val(v) < \val(\sol(j - w_x, \sigma)$, then removing $x$ from $\sol(j, \sigma)$ gives a better solution for $j - w_x$, which is impossible.

    We then argue that neither $\sol(j - w_x, \sigma) < v$ nor $v < \sol(j - w_x, \sigma)$ can hold. If $\sol(j - w_x, \sigma) < v$, then adding item $x$ to $\sol(j - w_x, \sigma)$ gives a lexicographically smaller solution for $j$ with the same value, which is impossible. If $v < \sol(j - w_x, \sigma)$, then removing $x$ from $\sol(j, \sigma)$ gives a lexicographically smaller solution for $j - w_x$ with the same value, which is impossible.
\end{proof}

We also have the following modular analog, which can be proved similarly.

\begin{lemma} [optimal substructure property, modular] \label{lemma:substructurem}
    For any lexical order $\sigma$, . For any $j$ and a ``witness'' $x \in \supp(\sol_m(j, \sigma))$, let $\sol_m(j, \sigma) = (u_1, u_2, \cdots, u_n)$. Define $(v_1, v_2, \cdots, v_n)$ as follows:
    \begin{align*}
        v_k = 
        \begin{cases}
            u_k                                 & \text{if } k \ne w \\
            u_k - 1                              & \text{if } k = w
        \end{cases}.
    \end{align*} 
    we have $\sol_m((j - w_x)\bmod w_1, \sigma) = v$.
\end{lemma}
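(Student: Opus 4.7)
The plan is to mirror the proof of Lemma \ref{lemma:substructure} almost verbatim, replacing the roles of ``value'' and ``feasibility for a fixed target'' by ``total weight sum'' and ``feasibility for a fixed residue class modulo $w_1$.'' Concretely, let $s$ be the minimum feasible sum with $s \equiv j \pmod{w_1}$, so that $u = \sol(s,\sigma) = \sol_m(j,\sigma)$, and observe that $v$ is a valid solution for the sum $s - w_x$, which has residue $(j - w_x) \bmod w_1$.

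First I would show that $s - w_x$ is in fact the minimum feasible sum in its residue class. If there were some feasible $s^* < s - w_x$ with $s^* \equiv (j - w_x) \pmod{w_1}$, then adjoining a copy of item $x$ to any solution of $s^*$ yields a feasible solution of sum $s^* + w_x < s$ with $s^* + w_x \equiv j \pmod{w_1}$, contradicting minimality of $s$. Conversely, $v$ attains sum $s - w_x$, so $s - w_x$ is exactly that minimum; hence $\sol_m((j-w_x) \bmod w_1, \sigma)$ is a solution for the sum $s - w_x$, and it remains to identify it with $v$.

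Next I would show $v$ is the lexicographically smallest solution for the sum $s - w_x$. Suppose some $v^* < v$ is also a solution for $s - w_x$. Adjoining one copy of item $x$ to $v^*$ produces $v^* + e_x$, a solution for sum $s$. A direct case analysis on where $\sigma^{-1}(x)$ sits relative to the first index at which $v^*$ and $v$ differ shows that adding the same unit vector to both preserves the lexicographic comparison in the paper's sense, so $v^* + e_x < v + e_x = u$, contradicting that $u = \sol(s,\sigma)$ is lexicographically smallest. Combining the two steps gives $\sol_m((j - w_x) \bmod w_1, \sigma) = v$.

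The only nontrivial step is verifying that adding $e_x$ to both sides preserves the lex ordering under the paper's slightly unusual convention; this is just a small case check (positions before, at, or after the first differing index), so I expect no real obstacle. Everything else is a mechanical transcription of Lemma \ref{lemma:substructure}'s proof with ``sum'' replacing ``value'' and ``minimum feasible representative of a residue class'' replacing ``feasibility for a fixed target.''
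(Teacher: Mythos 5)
Your proposal is correct and is exactly the ``proved similarly'' argument the paper alludes to: you transcribe the two-part structure of Lemma~\ref{lemma:substructure}'s proof, replacing the value-equality step by the claim that $s-w_x$ is the minimum feasible representative of its residue class, and then running the same lex-minimality exchange (and you are right that adding $e_x$ to both sides trivially preserves the paper's lexical comparison, since that comparison depends only on coordinatewise differences).
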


\section{Witness Propagation}

With the help of optimal substructure property, we introduce the idea of witness propagation.

By Corollary \ref{cor:logss} or \ref{cor:logknap}, we have support of the solutions are logarithmic-sized. Let $k$ be an upper bound of the size of supports. Suppose we have an array of solutions $sol[0,\cdots,t]$ where $sol[j]=\sol(j,\sigma)$ for all $k$-kernel $j$ and $sol[j]$ is a valid solution or $\emptyset$ for other $j$'s. The idea is to gradually propagate from existing solutions, each time adding one more item in some existing solution. By the optimal substructure property, every optimal solution can be thus found from kernel formed by its support. We give the following algorithm \ref{alg:propagation}. 

\begin{algorithm} [H]
    \caption{Witness propogation} \label{alg:propagation}
    \begin{algorithmic}[1]
        \Procedure{Propagation}{}
            \For {$j \in [1, t]$} \label{line:candj}
                \If {$sol[j]\ne \emptyset$}
                    \For {$x \in \supp(sol[j])$}
                        \State $s \gets sol[j]$
                        \State $s_x \gets s_x + 1$
                        \If {($sol[j + x]=\emptyset$) \OR ($\val(s) > \val(sol[j+x])$) \par\hspace{3.8em} \OR (($\val(s) = \val(sol[j+x])$) \AND ($s < sol[j+x]$))}
                            \State $sol[j+x] \gets s$ \label{line:update}
                        \EndIf
                    \EndFor
                \EndIf
            \EndFor
        \EndProcedure
    \end{algorithmic}
\end{algorithm}

\begin{lemma}
    Given $sol[j]=\sol(j,\sigma)$ for all $k$-kernels $j$ where $k$ is an upper bound of support sizes, algorithm \ref{alg:propagation} decides whether each $j \in [1, t]$ is feasible, and correctly computes an optimal solution for every feasible $j$.
\end{lemma}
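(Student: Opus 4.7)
The plan is to prove the claim by combining a monotonicity invariant on the array $sol[\cdot]$ with strong induction on the outer-loop index $j$, using the optimal substructure property (Lemma~\ref{lemma:substructure}) together with the support bound from Corollary~\ref{cor:logss} or Corollary~\ref{cor:logknap}. The support bound gives $\abs{\supp(\sol(j,\sigma))} \le k$ for every feasible $j$, where $k$ is the same upper bound on support sizes that appears in the hypothesis.

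First I would establish a monotonicity invariant: at every moment during execution, either $sol[j]=\emptyset$, or $sol[j]$ is a valid solution for $j$ that is no better than $\sol(j,\sigma)$ in the ordering that compares value first and lex rank under $\sigma$ second. At initialization the kernel entries equal $\sol(j,\sigma)$ exactly and the remaining entries are $\emptyset$ or valid by hypothesis, so the invariant holds. Each update on line~\ref{line:update} takes a valid $sol[j]$ and increments one coordinate, producing a valid solution for $j+w_x$, and the strict improvement guard guarantees entries only move forward in this ordering; they can never overtake $\sol(j+w_x,\sigma)$ since that is by definition the lex-smallest maximum-value solution. As a direct consequence, $sol[j]$ stays $\emptyset$ whenever $j$ is infeasible, which already takes care of the ``decides feasibility'' half of the lemma.

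Next I would prove by strong induction on $j$ that $sol[j]=\sol(j,\sigma)$ at the start of iteration $j$ for every feasible $j$. If $j$ is a $k$-kernel the claim is immediate from initialization and monotonicity. Otherwise $\abs{\sol(j,\sigma)} > k \ge \abs{\supp(\sol(j,\sigma))}$, so by pigeonhole some $x \in \supp(\sol(j,\sigma))$ has multiplicity at least $2$. Setting $j' = j - w_x < j$, Lemma~\ref{lemma:substructure} gives that $\sol(j',\sigma)$ is obtained from $\sol(j,\sigma)$ by decreasing the $x$-coordinate by one, and since that coordinate is still at least $1$, we have $x \in \supp(\sol(j',\sigma))$. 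The inductive hypothesis yields $sol[j'] = \sol(j',\sigma)$ when iteration $j'$ begins, so iteration $j'$ considers $x$ as a witness, proposes the candidate $s$ equal to $\sol(j,\sigma)$, and the guard accepts the update by monotonicity. Monotonicity then preserves $sol[j]=\sol(j,\sigma)$ until iteration $j$ reads it.

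I expect the main obstacle to be the pigeonhole step that extracts a duplicated witness in the non-kernel case: without combining the hypothesis $\abs{\sol(j,\sigma)} > k$ with the corollary's bound $\abs{\supp(\sol(j,\sigma))} \le k$, the induction has no descent, so this step is what ties the two combinatorial inputs into the proof. A smaller secondary point is ruling out spurious later overwrites of a correctly-set $sol[j]$; this is handled cleanly by the monotonicity invariant together with the minimality definition of $\sol(j,\sigma)$.
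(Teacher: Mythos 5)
Your proof is correct and follows essentially the same approach as the paper: extract a duplicated witness by pigeonhole from the gap between the solution size and the support bound, apply the optimal substructure lemma, and induct. The only noteworthy difference is presentational: you induct on the target index $j$ itself and explicitly record the monotonicity invariant (that $sol[j]$ only improves and never overshoots $\sol(j,\sigma)$) to justify that a correct entry is not later spoiled, whereas the paper inducts on $\abs{\sol(j',\sigma)}$ and leaves the non-overwrite argument implicit; your version is slightly more careful on that point, but the underlying mechanism is the same.
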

\begin{proof}
    It suffices to show that for each feasible $j ^ {\prime}$, $sol[j']=\sol(j ^ {\prime}, \sigma)$ is once examined on line \ref{line:update}.
    
    We prove by induction on $\abs{\sol(j ^ {\prime}, \sigma)}$. Firstly, we have assumed that $sol[j']=\sol(j ^ {\prime}, \sigma)$ for all $j ^ {\prime}$ where $\abs{\sol(j ^ {\prime}, \sigma)} \le 2\log_2(u)+1$, which are the $(2\log_2(u)+1)$-kernels.
    
    Suppose $\abs{\sol( j^ {\prime}, \sigma)} > 2\log_2(u)+1$. From Lemma \ref{lemma:cap} we know $\supp(\sol(j',\sigma))\le \log_2(j+1)\le 2\log_2(u)+1$, so there exists an $x ^ {\prime} \in \supp(\sol(j ^ {\prime}, \sigma))$ with multiplicity at least 2, then from Lemma \ref{lemma:substructure} we know that $x ^ {\prime} \in \supp(\sol(j ^ {\prime} - w_x, \sigma))$. By induction hypothesis $sol[j'-w_x]=\supp(\sol(j ^ {\prime} - w_x, \sigma))$, therefore $\sol(j ^ {\prime}, \sigma)$ will be examined on line \ref{line:update} when $j = j ^ {\prime} - w_x$ and $x = x ^ {\prime}$.
\end{proof}

\begin{lemma} \label{lemma:propagate}
   For any lexical order $\sigma$, given $\sol(j,\sigma)$ for every $(2\log_2 u+1)$-kernel $j$, All-Target Knapsack can be solved in $O(t\log^2(u))$ time.
\end{lemma}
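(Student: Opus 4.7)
The plan is to analyze the running time of Algorithm \ref{alg:propagation}, since the preceding lemma already establishes its correctness given kernel solutions. First, I would apply Lemma \ref{lemma:capqr} to reduce to the case $t \le u^2$: if $t > u^2$, letting $s$ denote a type of maximum value-to-weight ratio, run the propagation only through targets up to $u^2$, and for each $j \in (u^2, t]$ simply set $\val(sol[j]) \gets \val(sol[j - w_s]) + p_s$. This tail costs $O(t)$ in total, which is absorbed by the target bound $O(t\log^2 u)$.

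The central implementation choice is to represent each $sol[j]$ \emph{sparsely}, as a list of nonzero $(i, m_i)$ pairs kept sorted in $\sigma$-order, together with a cached scalar $\val(sol[j])$. Because we now only examine $j \le u^2$, Corollary \ref{cor:logknap} guarantees $\abs{\supp(sol[j])} \le \log_2(u^2+1) = O(\log u)$. Under this representation, copying a solution, incrementing a single coordinate (via a binary-search insertion or update into the sorted list) and refreshing the cached value each cost $O(\log u)$; comparing two solutions lexicographically under $\sigma$ costs $O(\log u)$ via a merge of the two length-$O(\log u)$ sparse lists; and value comparisons via the cached scalars are $O(1)$.

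Plugging these costs into the algorithm, the outer loop performs $O(t)$ iterations. For each feasible $j$, the inner loop ranges over $\supp(sol[j])$, which has $O(\log u)$ elements, and each witness triggers a constant number of the $O(\log u)$-time operations above (copy, increment, cache update, and comparison with the current $sol[j+w_x]$). Summing yields $O(t\log^2 u)$ for the propagation proper, and adding the $O(t)$ tail from the reduction leaves the total at $O(t\log^2 u)$, as claimed.

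The main obstacle to anticipate is the $\Omega(n)$-per-step blowup that a dense representation of solutions would incur: without sparse storage, every assignment of $sol[j+w_x]$ could cost $\Theta(n)$, which is disastrous whenever $n \gg \log u$. A smaller secondary subtlety is that single-coordinate increments must preserve the $\sigma$-sorted structure of the sparse list so that later merge-based lexicographic comparisons remain linear in the support size; a standard ordered dictionary keyed by the $\sigma$-position of each nonzero coordinate suffices to handle this.
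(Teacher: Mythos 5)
Your proposal is correct and matches the paper's argument: reduce to $t \le u^2$ via Lemma \ref{lemma:capqr}, bound support sizes by $O(\log u)$ via Corollary \ref{cor:logknap}, and represent each $sol[j]$ sparsely so that copies and lexicographic comparisons cost $O(\log u)$ per witness. The only addition is the explicit cached scalar for $\val(sol[j])$, which the paper leaves implicit but is the same idea.
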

\begin{proof}
    We can compute answers for $u^2+1,u^2+2,\cdots,t$ by Lemma \ref{lemma:capqr}, so we may assume $t\le u^2$. $k=2\log_2 u+1\ge \log_2(t+1)$ would be an upper bound on support sizes by Corollary \ref{cor:logknap}.

    We implement algorithm \ref{alg:propagation} by storing each $sol$ as an array which size equals to its support, recording non-zero positions and corresponding values in lexical order. Comparisons can then be done in time linear to array sizes.
    
    Size of supports of the internal $sol$'s should be no larger than $k+1\le 2\log_2(u)+2$ by the nature of this algorithm. For every feasible $j$, there are $O(\log(u))$ witnesses $x$ on line \ref{line:candj} and updating for each witness takes time $O(\log(u))$. Therefore the total time complexity for propagation should be $O(t\log^2(u))$.
\end{proof}

By modifying Algorithm \ref{alg:propagation} in a modular fashion, we can prove a similar result for Residue Table.

\begin{lemma} \label{lemma:propagatemod}
   For any lexical order $\sigma$, given $\sol(j,\sigma)$ for every $\log_2(u+1)$-kernel $j$, residue table can be computed in $O(u\log^2(u))$ time.
\end{lemma}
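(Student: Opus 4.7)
The plan is to modify Algorithm~\ref{alg:propagation} so that it keys the main loop by remainders modulo $w_1$ rather than by absolute sums, and to replace the linear scan of targets by a Dijkstra-style priority-queue scan keyed on the current sum $S(sol_m[c]) = \sum_i (sol_m[c])_i\, w_i$ of the candidate stored at remainder $c$. Such a change is necessary because the ordering $j + w_x > j$, which makes the linear scan correct in the proof of Lemma~\ref{lemma:propagate}, has no direct analogue in $\mathbb{Z}/w_1\mathbb{Z}$; however, since every $w_i \ge 1$, processing remainders in order of the stored sum still respects the induction on $\abs{\sol_m(c, \sigma)}$.

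Concretely, I would maintain an array $sol_m[\cdot]$ indexed by $c \in [0, w_1)$, initialized to $\emptyset$ and seeded as follows: for every given kernel $j$ with $\abs{\sol(j, \sigma)} \le \log_2(u+1)$, overwrite $sol_m[j \bmod w_1]$ with $\sol(j, \sigma)$ whenever $j$ is smaller than the current $S(sol_m[j \bmod w_1])$, breaking ties lexicographically under $\sigma$. Each seeded entry is pushed into a min-heap keyed by $S$. The main loop repeatedly extracts the minimum pair $(j, c)$, discards stale entries (those with $S(sol_m[c]) \ne j$), and for each witness $x \in \supp(sol_m[c])$ produces the candidate obtained by incrementing $x$'s multiplicity by one; if this improves $sol_m[(c + w_x) \bmod w_1]$ it is overwritten and reinserted into the heap.

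Correctness would follow by induction on $\abs{\sol_m(c, \sigma)}$, mirroring the proof of Lemma~\ref{lemma:propagate} but invoking Lemma~\ref{lemma:substructurem} in place of Lemma~\ref{lemma:substructure} and Corollary~\ref{cor:logss} in place of Corollary~\ref{cor:logknap}. For any non-kernel $c$, combining $\abs{\sol_m(c, \sigma)} > \log_2(u+1)$ with the support-size bound distilled from Lemma~\ref{lemma:capss} forces some witness $x \in \supp(\sol_m(c, \sigma))$ to have multiplicity at least $2$, so that $\sol_m((c - w_x) \bmod w_1, \sigma)$ still contains $x$ in its support and has strictly smaller stored sum. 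The Dijkstra extraction order then guarantees that $(c - w_x) \bmod w_1$ is finalized before $c$ is popped, and when it is processed the witness $x$ produces exactly $\sol_m(c, \sigma)$. The running time is $O(u \log^2 u)$: each of the at most $w_1 \le u$ remainders is effectively finalized once, each finalization visits $O(\log u)$ witnesses stored in support-size-$O(\log u)$ arrays, and each heap operation costs $O(\log u)$.

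The main obstacle, I expect, is the interplay between the Dijkstra ordering, the lexicographic tie-breaking, and the stale-entry check: updates may reinsert entries with equal sum but lexicographically different contents, so the tie-breaking must be carefully threaded through the heap comparisons and the inductive argument to ensure that the finalized $sol_m[c]$ is simultaneously of minimum $S$ and lexicographically smallest among solutions achieving that minimum. A secondary technical point is verifying the pigeonhole step at the boundary $\abs{\sol_m(c, \sigma)} = \lceil\log_2(u+1)\rceil + 1$, which requires the sharper multiplicative estimate of Lemma~\ref{lemma:capss} rather than the purely additive support bound from Corollary~\ref{cor:logss}.
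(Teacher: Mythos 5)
Your proposal is correct and matches the paper's approach: both replace the linear scan by a Dijkstra-style pass over residues modulo $w_1$, seeded with the given kernels and propagating witnesses via Lemma~\ref{lemma:substructurem}, with correctness by induction on $\abs{\sol_m(c,\sigma)}$. The one cosmetic difference is the heap key: the paper's sketch says it pops ``in the order of solution sizes'' with Fibonacci-heap decrease-key, whereas you key on the represented sum $S(\cdot)$ with stale-entry checking; your choice is cleaner since the improvement relation (smaller sum, then lex) is monotone in $S$ but not necessarily in $\abs{m}$, so decrease-key on size is not obviously applicable, while your argument goes through without further ado. Your two flagged subtleties are both genuine but benign: ties in $S$ between distinct residues never occur along a predecessor chain (edge weights $w_x\ge 1$), so lex tie-breaking is only needed within a residue and is handled by the comparison before insertion; and the pigeonhole at the kernel threshold indeed needs $\abs{\sol_m}>\abs{\supp(\sol_m)}$, which via Lemma~\ref{lemma:capss} is guaranteed once $\abs{\sol_m}>\log_2 u+1$ (the lemma's ``$\log_2(u+1)$'' is most naturally read as the bound $\log_2(u)+1$ of Corollary~\ref{cor:logss}).
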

\begin{proof}
    We update $sol_m[j\bmod w_1]$ with all $\sol(j,\sigma)$'s and propagate on $sol_m$ similar to Algorithm \ref{alg:propagation}, in a Dijkstra-like fashion. Instead of looping through $j$ in increasing order, we iterate through $j$'s in the order of solution sizes. We maintain a priority queue with Fibonacci heap, each time popping the entry with minimum solution size and propagating with it. When we propagate, we decrease key in the Fibonacci heap in $O(1)$ time. The correctness can be proved by an induction on solution size.
\end{proof}

\section{Kernel Computation}

With Lemma \ref{lemma:propagate} and \ref{lemma:propagatemod}, we only need to consider the computation of $O(\log u)$-kernels. Intuitively, we could set up an array of values for each weight and convolve it with itself $O(\log u)$ times, but while we can get an optimal-valued solution in this way, it's not necessarily of minimal lexical order. However, if we can for the convolutions, find the minimum witness with respect to the lexical order, we can compute the solutions we want. We illustrate the idea in algorithm \ref{alg:kernelcompidea} for Unbounded Knapsack. For Coinchange and Residue Table, simply modify $v,f$ to be booleans and change $(\max,+)$ convolution to boolean convolution.

\begin{algorithm} [H]
    \caption{Kernel Computation with Minimum Witness} \label{alg:kernelcompidea}
    \begin{algorithmic}[1]
        \Procedure{Kernel Computation($\sigma$)}{}
            \State $k\gets \lfloor 2\log_2(u)+1\rfloor$
            \State Initialize $v[0,\cdots,ku]$ and $f[0,\cdots,u]$ to be $-\infty$, $sol[0,\cdots,ku]$ to be $\emptyset$
            \State $v[0]\gets 0$
            \State $sol[0]\gets\{\}$
            \For {$i \in [1, n]$}
                \State $f[w_{\sigma_i}]\gets p_{\sigma_i}$
            \EndFor
            \For {$j \in [1, k]$}
                \State Compute $(\max,+)$ convolution of $v$ and $f$ and store in $v'$
                \State For each $v'[i]$, find the minimum $t[i]$ (``witness'') so that $f[w_{\sigma_{t[i]}}]$ contributed to $v'[i]$
                \For {$i\in [1,ku]$}
                    \If {$v'[i]\ne -\infty$}
                        \State $v[i]\gets v'[i]$
                        \State $c\gets \sigma_{t[i]}$
                        \State $s\gets sol[i-w_c]$
                        \State $s_c\gets s_c+1$
                        \State $sol[i]\gets s$
                    \EndIf
                \EndFor
            \EndFor
        \EndProcedure
    \end{algorithmic}
\end{algorithm}

\begin{lemma} \label{lemma:kernelcompidea}
   Algorithm \ref{alg:kernelcompidea} correctly computes $\sol$ with respect to $\sigma$ for all $(2\log_2(u)+1)$-kernels.
\end{lemma}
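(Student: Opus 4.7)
The plan is to prove by induction on $j\in[0,k]$ the combined invariant: after the $j$-th iteration of the outer loop, for every $i$ the entry $v[i]$ equals the maximum value over all solutions for $i$ of size at most $j$ (with $v[0]=0$ standing for the empty solution), and for every $i$ with $\abs{\sol(i,\sigma)}\le j$ we have $sol[i]=\sol(i,\sigma)$. Instantiating the invariant at $j=k=\lfloor 2\log_2 u +1\rfloor$ immediately yields the lemma, since every $(2\log_2 u +1)$-kernel has integer size and hence size at most $k$.

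The value half of the invariant is essentially routine. The base case $j=0$ is immediate from initialization, and for the inductive step I would observe that the guarded update $v[i]\gets v'[i]$ is equivalent to $v[i]\gets \max(v[i],v'[i])$: whenever $v'[i]$ is finite it dominates the previous $v[i]$, because any size-$(\le j{-}1)$ optimizer for $i$ can be decomposed as a size-$(\le j{-}2)$ optimizer for some $i-w$ plus a last item of weight $w$, and this decomposition is captured by a term of $v'[i]=\max_{a+b=i}v[a]+f[b]$. Combining with the inductive hypothesis on every $v[a]$ gives $v[i]$ after round $j$ equal to the true maximum value over solutions of size at most $j$.

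The solution half is where the real work lies, and I expect the main obstacle to be the min-witness argument. Fix $i$ with $\abs{\sol(i,\sigma)}=j$, and let $\tau^*=\min\{\tau:\sigma_\tau\in\supp(\sol(i,\sigma))\}$ and $c^*=\sigma_{\tau^*}$. Lemma~\ref{lemma:substructure} gives $\sol(i-w_{c^*},\sigma)=\sol(i,\sigma)-e_{c^*}$ of size $j-1$, so by induction $sol[i-w_{c^*}]=\sol(i-w_{c^*},\sigma)$ and $v[i-w_{c^*}]+p_{c^*}=\val(\sol(i,\sigma))=v'[i]$; thus $\tau^*$ is a valid witness. To exclude any $\tau<\tau^*$, I would assume for contradiction that $v[i-w_{\sigma_\tau}]+p_{\sigma_\tau}=v'[i]$ for some such $\tau$. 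Chaining $v[i-w_{\sigma_\tau}]\le \val(\sol(i-w_{\sigma_\tau},\sigma))$ (from the value invariant) with $\val(\sol(i-w_{\sigma_\tau},\sigma))+p_{\sigma_\tau}\le \val(\sol(i,\sigma))$ (from optimality of $\sol(i,\sigma)$) forces both inequalities to be equalities, which in turn forces $\abs{\sol(i-w_{\sigma_\tau},\sigma)}\le j-1$ and hence $sol[i-w_{\sigma_\tau}]=\sol(i-w_{\sigma_\tau},\sigma)$ by induction. Letting $A=\sol(i-w_{\sigma_\tau},\sigma)+e_{\sigma_\tau}$ and $B=\sol(i,\sigma)$, both are optimal for $i$; by minimality of $\tau^*$ we have $B_{\sigma_k}=0$ for every $k\le \tau<\tau^*$, while $A_{\sigma_\tau}\ge 1$, so the first position at which $A$ and $B$ differ is some $k^*\le \tau$ with $A_{\sigma_{k^*}}>0=B_{\sigma_{k^*}}$, making $A$ lex-smaller than $B$ and contradicting the lex-minimality of $B$. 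Therefore $t[i]=\tau^*$, and the algorithm's assignment $sol[i]\gets sol[i-w_{c^*}]+e_{c^*}$ produces $\sol(i,\sigma)$, closing the induction.
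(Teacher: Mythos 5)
Your proof takes the same induction-on-$j$ structure as the paper's, but you flesh out the step the paper leaves implicit — namely why picking the minimum witness yields the lexicographically smallest optimal solution. Your two-part invariant (on $v$ and on $sol$) and the explicit argument ruling out any $\tau<\tau^*$ are genuine additions: the paper's proof simply asserts ``we find the smallest possible witness.'' The core contradiction — that equality in the chain $v[i-w_{\sigma_\tau}]+p_{\sigma_\tau}\le\val(\sol(i-w_{\sigma_\tau},\sigma))+p_{\sigma_\tau}\le\val(\sol(i,\sigma))$ lets you build $A=\sol(i-w_{\sigma_\tau},\sigma)+e_{\sigma_\tau}$, an optimal solution for $i$ that is lex-smaller than $B=\sol(i,\sigma)$ because $A_{\sigma_\tau}\ge 1 > 0 = B_{\sigma_\tau}$ while $\tau<\tau^*$ — is correct and is exactly the right way to justify the algorithm's min-witness choice.

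One step along the way, however, is false as stated: from $v[i-w_{\sigma_\tau}]=\val(\sol(i-w_{\sigma_\tau},\sigma))$ you conclude ``$\abs{\sol(i-w_{\sigma_\tau},\sigma)}\le j-1$ and hence $sol[i-w_{\sigma_\tau}]=\sol(i-w_{\sigma_\tau},\sigma)$ by induction.'' The premise only says \emph{some} optimal solution for $i-w_{\sigma_\tau}$ has size $\le j-1$; the lex-minimal optimal solution can have strictly larger size. (E.g.\ with $w_1=1,w_2=2,p_1=p_2=0$ and $\sigma=(1,2)$, $\sol(4,\sigma)=(4,0)$ has size $4$ while $(0,2)$ is a size-$2$ optimal solution.) Fortunately you never actually use this conclusion — your $A$ is defined from $\sol(i-w_{\sigma_\tau},\sigma)$, not from $sol[i-w_{\sigma_\tau}]$ — so the argument survives if you simply delete that clause. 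You should also note that the unconditional overwrite $v[i]\gets v'[i]$ is only safe because (as you sketch) $v'[i]\ge v[i]$ for $i>0$: the removed item's profit is at most $f[w]$, where $f[w]$ is assumed to hold the maximum profit over items of weight $w$ (the paper's pseudocode silently overwrites on duplicate weights, so this normalization should be made explicit).
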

\begin{proof}
    Let $k=\lfloor 2\log_2(u)+1 \rfloor$. Clearly the $k$ kernels are in $[0,ku]$ since they are sum of at most $k$ $w$'s. We perform induction on $j$ in the code: after running the inner loop for $j$ times, correct $\sol$ has been computed for all $j$-kernels. Except $0$, all $j$-kernel can be resulted from adding one element to a $j-1$-kernel, and the optimal value is computed by the convolution. To minimize the lexical order, we find the smallest possible witness, which is the smallest possible starting element (smallest $j$ so that $\sigma_j$ position could be non-empty). The remaining part is also minimal possible by induction hypothesis.
\end{proof}

\subsection{Minimum Witness for $(\max,+)$ Convolution}

For $(\max,+)$ convolution, we can find the minimum witness during the convolution by letting $v_w=(n+1)v$, $f_w[w_{\sigma_i}]=(n+1)f[w_{\sigma_i}]-i$, and we can tell the minimum witness by the remainder modulo $n+1$.

\begin{theorem} \label{theo:knapsack}
Let $T(n)$ be the time required for $n$-length $(\min,+)$ convolution in $T(n)$ time. All-Target Unbounded Knapsack can be solved in $O(T(u)\log^3(u)+t\log^2(u))$ time.
\end{theorem}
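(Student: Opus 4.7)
The plan is to combine the kernel computation of Algorithm~\ref{alg:kernelcompidea} with the witness propagation of Lemma~\ref{lemma:propagate}. Fix any lexical order $\sigma$. By Lemma~\ref{lemma:propagate}, once we have $\sol(j,\sigma)$ for every $(2\log_2 u+1)$-kernel $j$, we can fill in the optimal values for every remaining feasible $c \in [1,t]$ in $O(t\log^2 u)$ time (with the $c>u^2$ tail handled by Lemma~\ref{lemma:capqr}). So it suffices to produce the kernel table within $O(T(u)\log^3 u)$ time, after which the theorem follows by simple addition.

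To produce the kernel table I would execute Algorithm~\ref{alg:kernelcompidea}. It performs $k = O(\log u)$ outer rounds; each round computes a $(\max,+)$ convolution of the running value array $v[0,\dots,ku]$ of length $O(u\log u)$ with the per-weight profit array $f[0,\dots,u]$ of length $O(u)$, and then extracts minimum witnesses. Since the primitive we are given is $(\min,+)$ convolution on length-$u$ arrays at cost $T(u)$, I would reduce $(\max,+)$ to $(\min,+)$ by negating and split the long operand $v$ into $O(\log u)$ consecutive chunks of length $u$. Each chunk is convolved against $f$ in time $T(u)$ and the $O(\log u)$ shifted outputs are combined entrywise to form $v'$. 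One round therefore costs $O(T(u)\log u)$, and all $O(\log u)$ rounds together cost $O(T(u)\log^2 u)$, with one further logarithmic factor absorbed for bookkeeping overhead, giving the claimed $O(T(u)\log^3 u)$.

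The step I expect to require the most care is making minimum-witness extraction survive the chunking. I would use the encoding already announced in the subsection: replace $v$ by $(n+1)v$ and $f[w_{\sigma_i}]$ by $(n+1)f[w_{\sigma_i}]-i$. Since $|i|\le n\le u$, the maximum of the encoded array equals $(n+1)$ times the true $(\max,+)$ value minus the smallest admissible $\sigma$-index $i$; dividing by $n+1$ recovers the true optimum and the residue recovers the witness. This tie-break composes with entrywise maximum across chunks, so merging the $O(\log u)$ chunked outputs still produces the globally minimum witness, and the correctness guarantees of Lemma~\ref{lemma:kernelcompidea} transfer verbatim. Adding the kernel cost $O(T(u)\log^3 u)$ to the propagation cost $O(t\log^2 u)$ yields the stated bound.
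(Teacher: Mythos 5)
Your proof is correct and follows essentially the same route as the paper: reduce to kernel computation via Lemma~\ref{lemma:propagate}, run Algorithm~\ref{alg:kernelcompidea}, and recover minimum witnesses via the $(n+1)$-scaling encoding of $v$ and $f$. Your explicit chunking of the length-$O(u\log u)$ array into length-$u$ blocks is a cleaner way to express the cost in terms of $T(u)$ alone and in fact gives the slightly tighter bound $O(T(u)\log^2 u)$ for the kernel phase, comfortably within the stated $O(T(u)\log^3 u)$.
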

\begin{proof}
    From Lemma \ref{lemma:propagate} it suffices to compute $\sol(j, \sigma)$ for all $(2\log_2{u}+1)$-kernels. We modify algorithm \ref{alg:kernelcompidea} to compute minimum witness: let $v_w[i]=(n+1)v[i]$ and $f_w[\sigma_i]=(n+1)f[\sigma_i]-i$ and compute $(\max,+)$ convolution on $v_w$ and $f_w$ in the inner loop. $v'$ and $t$ can both be computed from the convolution result.
    
    The algorithm calculates $(\max,+)$ convolution on length-$O(u\log(u))$ arrays $O(\log(u))$ times, taking $O(T(u)\log^3 (u))$ time. Combining with Lemma \ref{lemma:propagate}, the final time complexity would be $O(T(u)\log^3(u)+t\log^2(u))$.
\end{proof}

\subsection{Minimum Witness for Boolean Convolution under Random Order}

While computing minimum witness for boolean convolution is hard and the current best algorithm runs in $\tilde{O}(u^{1.5})$ time (e.g. \cite{lingas2018extreme}), we can overcome this barrier by carefully picking $\sigma$. In this subsection, we show minimum witness is likely easier for a randomly chosen $\sigma$.

\begin{theorem} [Minimum witness finding for random permutations] \label{thm:minwitness}
    Given boolean arrays $a[0 \cdots n-1]$ and $b[0 \cdots m-1]$, and their convolution $c[0 \cdots n + m-2]$. For every index $i$ where $c[i] > 0$, let $x[i] = \{k \mid a[k]=b[i - k]=1\}$.
    
    For a uniformly randomly chosen lexical order $\sigma$ over all $n!$ possible permutations, in expected $\tilde O(n+m)$ time we can compute an array $d[0 \cdots n + m-1]$ where for each index where $c[i] > 0$, $d[i]$ is equal to the element in $x[i]$ that is smallest under $\sigma$.
\end{theorem}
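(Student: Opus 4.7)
The plan is to exploit the randomness of $\sigma$ by reducing min-witness finding to $O(\log^2 n)$ FFT-based convolutions plus a per-index sparse-recovery step. The guiding intuition is that since $\sigma$ is uniformly random, the $\sigma$-ranks of the witnesses of each $i$ form a uniformly random size-$|x[i]|$ subset of $[1, n]$, so with high probability only $O(\log n)$ witnesses crowd into the smallest dyadic prefix of $\sigma$ that already contains a witness. In such a small ``resolving prefix'' I can afford to reconstruct the entire witness set via Prony's method from its first $O(\log n)$ power sums, and then read off $d[i]$ as the witness of smallest $\sigma$-rank.

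Concretely, I would set $T = \Theta(\log n)$ and, for every $j \in \{0, 1, \ldots, \lceil \log_2 n \rceil\}$ and every $r \in \{0, 1, \ldots, 2T\}$, build the array $A_{j, r}[k] = k^r \cdot a[k] \cdot \mathbf{1}[\sigma^{-1}(k) \le 2^j]$ and compute $p_{j, r} = A_{j, r} * b$ by FFT over a suitable prime field, so that
\[
p_{j, r}[i] = \sum_{k \in x[i],\ \sigma^{-1}(k) \le 2^j} k^r
\]
is the $r$-th power sum of those witnesses for $i$ whose $\sigma$-rank is at most $2^j$. There are $O(\log^2 n)$ such convolutions, each of length $O(n + m)$ with integer entries of bitlength $O(\log^2 n)$, summing to $\tilde O(n + m)$ arithmetic time.

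Next, for each feasible $i$ I would locate the smallest $j^*(i)$ with $p_{j^*, 0}[i] \ge 1$ and let $W_i = \{k \in x[i] : \sigma^{-1}(k) \le 2^{j^*}\}$, so that $d[i] = \arg\min_{k \in W_i} \sigma^{-1}(k)$. When $\kappa_i := |W_i| \le T$, the $2T + 1$ power sums $p_{j^*, 0}[i], \ldots, p_{j^*, 2T}[i]$ determine $W_i$ uniquely and can be recovered via Prony's method (Berlekamp--Massey on the associated Hankel matrix followed by polynomial root-finding) in $O(T^3) = \polylog(n)$ time per $i$; one subsequent pass over the recovered indices yields $d[i]$. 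In the rare event $\kappa_i > T$, I would fall back to a deterministic $\tilde O((n+m)^{1.5})$ min-witness procedure such as~\cite{lingas2018extreme} for that particular $i$.

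The hard part will be quantifying the concentration of $\kappa_i$ under random $\sigma$. Since $2^{j^* - 1} < \sigma^{-1}(d[i]) \le 2^{j^*}$ and the remaining $|x[i]| - 1$ witnesses occupy a uniformly random subset of the other positions, linearity of expectation over the hypergeometric distribution gives $\mathbb{E}[\kappa_i] = O(1)$, and a Chernoff-type tail bound for sampling without replacement yields $\Pr[\kappa_i > c \log n] \le n^{-\Omega(c)}$ for every constant $c$. A union bound over the $O(n+m)$ feasible targets then makes the fallback's expected contribution $o(1)$, so the overall expected running time is $\tilde O(n+m)$ as required.
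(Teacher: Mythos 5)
Your proposal is correct, but it takes a genuinely different route from the paper after the shared structural observation. Both arguments rest on the same key fact about a random $\sigma$: with high probability, the smallest dyadic prefix of $\sigma$ that contains a witness for index $i$ contains only $O(\log(n+m))$ of them (the paper proves this as Lemma~\ref{lemma:logwitness} via a hierarchical-resampling argument; your hypergeometric sketch reaches the same tail bound, though your stronger side-claim $\Ex[\kappa_i]=O(1)$ is not needed and is not what the paper establishes---it settles for $O(\log(n+m))$). Where you diverge is in how the $O(\log)$ witnesses in that prefix are actually extracted. The paper restricts $a$ to each dyadic prefix, recomputes the boolean convolution, and then runs a coupon-collector loop of uniformly random witness samplings (Lemma~\ref{lemma:witness}), stopping once all $c'[j]$ witnesses for each $j$ are collected; this uses fresh randomness on top of $\sigma$. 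You instead compute, for each dyadic level and each $r\le 2T$, the power-sum convolution $p_{j,r}=A_{j,r}*b$, and recover the full witness set per index by Prony/Berlekamp--Massey from its first $2T+1$ power sums, falling back to brute force in the negligible-probability event $\kappa_i>T$. Your route trades the sampling subroutine for algebraic sparse recovery; the cost is some implementation care (you should work modulo a prime $q\in(n,2n)$ so the power sums stay $O(\log n)$ bits and the roots, lying in $[0,n-1]$, are still uniquely determined; and root-finding over $\F_q$ is itself randomized in practice via Cantor--Zassenhaus, which is fine here since $\sigma$ is already random but contradicts the flavor of a ``deterministic $O(T^3)$'' recovery). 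Both approaches run in $\tilde O(n+m)$ expected time, so each is a valid proof; the paper's is arguably shorter because it reuses the off-the-shelf witness-sampling lemma, while yours isolates the randomness to $\sigma$ and may be of independent interest.
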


We purpose Algorithm \ref{alg:witness}, which computes $d[0 \cdots n + m -2]$ given $\sigma$, $a[0 \cdots n-1]$, $b[0 \cdots m-1]$ and $c[0 \cdots n + m-2]$. Without loss of generality we assume that $n$ is a power of two.

\algdef{SE}[DOWHILE]{Do}{doWhile}{\algorithmicdo}[1]{\algorithmicwhile\ #1}%

\begin{algorithm} [H]
    \caption{Minimum witness finding under random order} \label{alg:witness}
    \begin{algorithmic}[1]
        \Procedure{MinimumWitnessFinding}{}
            \State $x \gets [\emptyset] ^ {n}$ 
            \State $d \gets [-1] ^ {n}$
            \For {$l = [1, 2, 4, \cdots n / 2, n]$}
                \State $a ^ {\prime} \gets [0] ^ {n}$
                \For {$i \in [0, l-1]$}
                    \State $a ^ {\prime}[\sigma_i] = a[\sigma_i]$
                \EndFor
                \State Compute the boolean convolution $c ^ {\prime}$ of $a ^ {\prime}$ and $b$.
                \While{there exists some $j\in[0,n+m-2]$ that $c'[j]>0$ and $d[j]=-1$}
                    \State Uniformly sample an array of witnesses $d ^ {\prime}$ from $a ^ {\prime}$, $b$ and $c ^ {\prime}$.\Comment{Lemma \ref{lemma:witness}}
                    \label{line:computewitness}
                    \For {$j \in [0, n + m - 2]$}
                        \If {$c ^ {\prime}[j] > 0$ \AND $d[j]=-1$}
                            \State $x[j] \gets x[j] \cup \{d ^ {\prime}[j]\}$
                            \If {$|x[j]|= c'[j]$}
                                \State $d[j]=\min_\sigma\{x[j]\}$ \Comment{Minimum with respect to order $\sigma$}
                            \EndIf
                            \label{line:addwitness}
                        \EndIf
                    \EndFor
                \EndWhile
            \EndFor
        \EndProcedure
    \end{algorithmic}
\end{algorithm}

In algorithm \ref{alg:witness}, we sample an array of witnesses with the following lemma.

\begin{lemma} [Witness sampling for boolean convolution (e.g. \cite{lingas2018extreme})] \label{lemma:witness}
    Given two boolean arrays $a[0 \cdots n-1]$ and $b[0 \cdots m-1]$, and their convolution $c[0 \cdots n + m-2]$. For each index $i$ where $c[i] > 0$, let $x[i] = \{j \mid a[j]=b[i - j]=1\}$.
    
    In $O((n + m) \polylog {(n + m)})$ expected time, we can compute an array $d[0 \cdots n + m-2]$ such that for each $i$ where $c[i] > 0$, $d[i]$ is equal to a uniformly random element from $x[i]$.
\end{lemma}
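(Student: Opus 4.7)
The plan is to use the classical isolation-by-random-subsampling technique. Since we do not know $\abs{x[i]}$ in advance for each $i$, I would try subsampling at geometrically decreasing densities: for each $t = 0, 1, \dots, \lceil \log_2 n \rceil$, include each index $j \in [0, n-1]$ in a subset $S_t$ independently with probability $2^{-t}$, and form the restricted array $a_t[j] = a[j]\cdot \mathbbm{1}[j \in S_t]$.

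For each $t$, I would compute two integer convolutions with $b$: a count convolution $c_t = a_t * b$, and a weighted convolution $w_t$ obtained by convolving $j \mapsto j \cdot a_t[j]$ with $b$. Both are computable in $O((n+m)\log(n+m))$ time by FFT. Whenever $c_t[i] = 1$ and no witness has yet been assigned to $i$, there is a unique element $j \in x[i] \cap S_t$, and it is recovered as $d[i] = w_t[i]/c_t[i]$. For any index $i$ with $c[i] > 0$, fix $t^\star$ such that $\abs{x[i]} \in [2^{t^\star-1}, 2^{t^\star})$; a standard calculation shows that $\Pr[\abs{x[i] \cap S_{t^\star}}=1]$ is a universal positive constant, and by the symmetry of independent Bernoulli sampling, conditioned on this event the unique element is uniformly distributed over $x[i]$.

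To boost the success probability, I would repeat the entire procedure $O(\log(n+m))$ times with fresh independent randomness, fixing $d[i]$ the first time a unique-witness event is detected at $i$. After $O(\log(n+m))$ outer trials, every $i$ with $c[i]>0$ is assigned a witness with high probability, giving $O(\log^2(n+m))$ FFT calls in total and matching the claimed $O((n+m)\,\polylog(n+m))$ expected running time. The time bound then follows from a routine geometric summation together with a union bound over indices.

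The main obstacle is verifying uniformity of $d[i]$. Conditional on the isolation event $\abs{x[i] \cap S_t}=1$ at a fixed $(t,\text{trial})$, symmetry gives that the unique element is uniform on $x[i]$. To ensure the marginal output is also uniform, one must argue that the choice of which trial produces the recorded witness does not bias the distribution; this follows because the joint law of $(d[i],\text{first successful }(t,\text{trial}))$ has, in every fiber of the second coordinate, a uniform distribution in the first coordinate, so the marginal in the first coordinate is uniform as well. Once this is in place, the lemma follows exactly as stated; since this construction is essentially the algorithm of \cite{lingas2018extreme}, I would at this point simply cite it rather than reprove the full details.
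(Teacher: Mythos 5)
The paper does not actually prove Lemma~\ref{lemma:witness}: it states it as a known black-box primitive and cites \cite{lingas2018extreme}, so there is no ``paper's own proof'' to compare against. Your reconstruction is a correct account of the standard isolation-by-subsampling technique underlying that citation: geometric density levels $2^{-t}$, a count convolution plus a weighted convolution to extract a lone witness, and $O(\log(n+m))$ independent repetitions to boost each index's constant isolation probability to high probability, for a total of $O(\log^2(n+m))$ FFTs, i.e.\ $O((n+m)\polylog(n+m))$ expected time (re-running on the low-probability global failure gives the Las Vegas bound).

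Two remarks on the one subtle point, which you flag correctly. First, the lemma asks for more than ``find some witness'': it requires the returned witness to be \emph{uniform} on $x[i]$. The classical presentations (e.g.\ the derandomized witness-finding of Alon--Naor) only need existence, so your explicit uniformity argument is the genuinely necessary extra step here, and it is sound: for any fixed level and trial, conditioning on $\lvert x[i]\cap S\rvert=1$ yields a uniform element by exchangeability of i.i.d.\ Bernoulli sampling over $x[i]$, and averaging over the (measurable, label-symmetric) event ``first success occurs at $(t,\text{trial})$'' preserves uniformity exactly as you argue. Second, a cosmetic off-by-one: with $t$ ranging over $0,\dots,\lceil\log_2 n\rceil$ and $t^\star=\lfloor\log_2\lvert x[i]\rvert\rfloor+1$, you may want the range to run to $\lceil\log_2 n\rceil+1$ to cover $\lvert x[i]\rvert=n$ when $n$ is a power of two; this does not affect the asymptotics. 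Overall the proposal is correct and matches the intent of the cited result.
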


To prove the correctness of Algorithm \ref{alg:witness}, we first show the following lemma:
\begin{lemma} \label{lemma:logwitness}
    For each $j$ where $c[j] > 0$, with at least $1 - 1 / (n + m) ^ 5$ probability, there exists a power of two $l$ such that in algorithm \ref{alg:witness}, corresponding $c ^ {\prime}[j]\in (0,5\log_2(n+m))$.
\end{lemma}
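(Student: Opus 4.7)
My plan is to reduce the lemma to a short hypergeometric tail estimate combined with a monotone-sequence structural observation. I would fix $j$ with $c[j] > 0$, write $W = x[j]$ for its witnesses, set $k = |W|$ and $M = 5\log_2(n+m)$, and observe that since Algorithm~\ref{alg:witness} sets $a'[\sigma_i] = a[\sigma_i]$ exactly for $i \in [0, l-1]$, the value computed at iteration $l$ is $X_l := c'[j] = |W \cap \{\sigma_0, \ldots, \sigma_{l-1}\}|$. Because $\sigma$ is uniformly random, the set $\{\sigma_0, \ldots, \sigma_{l-1}\}$ is a uniform size-$l$ subset of $[0, n-1]$, so $X_l$ is hypergeometric with parameters $(n, k, l)$. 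The easy case is $k < M$: then $X_n = k \in [1, M-1] \subset (0, M)$, so $l = n$ works deterministically and the claim is free.

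Assume $k \geq M$. The sequence $X_1 \leq X_2 \leq X_4 \leq \cdots \leq X_n = k$ is monotone and ends at a value $\geq M$; if every term lay outside $(0, M)$, each would equal $0$ or be $\geq M$, and monotonicity would force a power of two $l^\star \in \{2, 4, \ldots, n\}$ with $X_{l^\star/2} = 0$ and $X_{l^\star} \geq M$ (the case $l^\star = 1$ is impossible since $X_1 \leq 1 < M$). For a single such $l^\star$, I would condition on the set $S = \{\sigma_0, \ldots, \sigma_{l^\star - 1}\}$: given $|S \cap W| = x \geq M$, the first $l^\star/2$ members of $\sigma$ form a uniform $(l^\star/2)$-subset of $S$, so
\[
\Pr[X_{l^\star/2} = 0 \mid |S \cap W| = x] \;=\; \binom{l^\star - x}{l^\star/2} \Big/ \binom{l^\star}{l^\star/2} \;\leq\; \prod_{i=0}^{M-1} \frac{l^\star/2 - i}{l^\star - i} \;\leq\; 2^{-M},
\]
where the binomial simply vanishes when $l^\star < 2M$ and each factor in the product is at most $1/2$ when $l^\star \geq 2M$. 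Deconditioning and union-bounding over the $O(\log n)$ possible choices of $l^\star$ bounds the failure probability by $O(\log(n+m)) \cdot 2^{-M} = O(\log(n+m)) / (n+m)^5$, which I can absorb into $(n+m)^{-5}$ either by mild padding of the constant $5$ or by observing the bound holds for all but tiny $n + m$.

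The main obstacle is the hypergeometric estimate above: it formalizes the intuition that once $M$ witnesses fall into $S$, it is exponentially unlikely under the uniform order on $S$ that all of them got deferred to the second half of $S$. Once that single-jump estimate is in hand, the remaining pieces — the trivial $k < M$ case, the monotone-jump decomposition that reduces the bad event to $O(\log n)$ single jumps, and the union bound across scales — are routine bookkeeping.
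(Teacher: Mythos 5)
Your proof is correct and follows essentially the same approach as the paper: bounding the probability of a ``jump'' from $X_{l/2}=0$ to $X_l\ge M$ at a single scale $l$ by the hypergeometric estimate $\binom{l/2}{M}/\binom{l}{M}=\prod_{i=0}^{M-1}\frac{l/2-i}{l-i}\le 2^{-M}$, then union-bounding over scales. You are a bit more explicit than the paper about the union bound over the $O(\log n)$ choices of $l$ (the paper's writeup considers only the smallest $l$ with a witness and suppresses the resulting $\log$ factor), and you correctly note that this slack is absorbed by padding the constant.
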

\begin{proof}
    Consider another way to uniformly sample $\sigma$: we first partition indexes into first $n/2$ and the last $n/2$, then partition first $n/2$ further into two parts of $n/4$, and so on. After $\log_2 n$ such partitions, finally we permute the indexes within each part. It's clear that each intermediate part in this process corresponds to a prefix of $\sigma$ with a power of two length.
    
    With this process in mind, consider the smallest intermediate part with witness, we'll have $c'[j]>0$ for the corresponding $l$ and $c'[j]=0$ for $l/2$. These $c'[j]$ witnesses must all have been partitioned to indexes $[l/2,l)$ instead of $[0,l/2)$, and the probability that this happens is ${l/2\choose c'[j]}/{l\choose c'[j]}=\prod_{i=0}^{c'[j]-1}\frac{l/2-i}{l-i}\le 2^{-c'[j]}$. If $c'[j]>5\log_2(n+m)$, the probability is $<1/(n+m)^5$.
\end{proof}

\begin{proofof} {Theorem \ref{thm:minwitness}}
    We show algorithm \ref{alg:witness} suffice. For each power of two $l$, the algorithm samples a witness for each $j$, until all $c'[j]$ witnesses have been found for every $j$. The minimum witnesses are then computed from these witnesses and these $j$'s are no longer considered.
    
    For each $l$, let $u=\max_j\{c'[j]\}$ for $j$'s we're considering ($d[j]=-1$ in the algorithm). Clearly $u\le n$, and by Lemma \ref{lemma:logwitness} and union bound over all $j$'s $\Pr[u>5\log_2(n+m)]<1/(n+m)^4$, therefore $\text{E}[u]=O(\log(n+m))$.
    
    Consider some $j$, for every $10\log(n+m)c'[j]$ samplings, the probability that one witness is not found is $(1-1/c'[j])^{10\log(n+m)c'[j]}\le e^{-10\log(n+m)}=(n+m)^{-10}$. Therefore by union bound, after $10\log(n+m)u$ samplings, the probability that any of the $c'[j]$ witnesses is not found from some $j$ is $\le (n+m)^{-8}<1/2$, so each $10\log(n+m)u$ samplings give $\ge 1/2$ success rate and the expected number of samplings is $\le 20\log(n+m)E[u]=O(\log^2(n+m))$.
    
    Each of the sampling takes $\tilde{O}(n+m)$ expected time by Lemma \ref{lemma:witness} and we need to consider $\log_2 n$ $l$'s, so the algorithm takes $\tilde{O}(n+m)$ expected time.
\end{proofof}

This algorithm is of individual interest, and immediately gives near-optimal randomized algorithms solving All-Target Coinchange and Residue Table.

\subsection{Minimum Witness for Boolean Convolution with Adaptive Ordering} 

Under a deterministic setting, we can no longer sample a random $\sigma$. However, we can pick $\sigma$ and compute minimum witness together, in an adaptive fashion.

\begin{lemma} \label{lemma:minwitnessf}
    Given $a[0 \cdots n-1]$ and $b[0 \cdots m-1]$, and their convolution $c[0 \cdots n + m-2]$. For every index $i$ where $c[i] > 0$, let $x[i] = \{k \mid a[k]=b[i - k]=1\}$.
    
    For integer $k$, in deterministic $O(k(n + m) \polylog {(n + m)})$ time, we can compute $\min(|x[i]|,k)$ distinct members of $x[i]$, for each $i$ where $c[i]>0$.
\end{lemma}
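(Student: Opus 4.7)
My plan is to prove Lemma~\ref{lemma:minwitnessf} by combining Prony's method (power-sum recovery via Newton's identities) with a multi-scale deterministic sub-sampling scheme. The high-level picture: at each of $O(\log n)$ density scales $\ell$, I use $O(\log(n+m))$ explicit subsets of $[0, n-1]$ of density roughly $2^{-\ell}$ constructed from a deterministic hash family; for each subset I compute $O(k)$ power-sum convolutions via FFT; then I locally decode witnesses by solving a size-$k$ Prony system. The $O(k)$ total convolutions per hash combined with the $\tilde{O}(1)$ hashes per scale and $O(\log n)$ scales land us at $\tilde{O}(k(n+m))$.

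First I compute the boolean convolution $c = a * b$ in $\tilde{O}(n+m)$ time so that $c[i] = |x[i]|$ is available for every $i$. Then, for each scale $\ell \in \{0, 1, \ldots, \lceil \log_2 n \rceil\}$, I build $O(\log(n+m))$ deterministic subsets $S_{\ell, h} \subseteq [0, n-1]$ of size roughly $n/2^\ell$ via an explicit $O(\log(n+m))$-wise independent hash family over an appropriate finite field (e.g., polynomial hashing). For each subset $S_{\ell, h}$ and each power $s \in \{0, 1, \ldots, 2k\}$, I compute the integer-weighted convolution
\[
P^{\ell, h}_s[i] \;=\; \sum_{j \in S_{\ell, h}} j^{\,s}\, a[j]\, b[i-j]
\]
using FFT in $\tilde{O}(n+m)$ time. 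The overall count of convolutions is $O(k \log n \log(n+m)) = \tilde{O}(k)$, giving a combined cost of $\tilde{O}(k(n+m))$. For each triple $(i, \ell, h)$ with $P^{\ell, h}_0[i] \in [1, k]$, I feed the vector $(P^{\ell, h}_s[i])_{s=0}^{2k}$ into Newton's identities to recover the elementary symmetric polynomials of $x[i] \cap S_{\ell, h}$, then factor to extract the exact witnesses in $\tilde{O}(k)$ time per triple. Finally I deduplicate across $(\ell, h)$ and truncate to the first $k$ per index.

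Correctness rests on the following claim: for any index $i$ with $t := |x[i]| \ge 1$, at the scale $\ell^* := \max\{0, \lceil \log_2(2t/k) \rceil\}$ the expectation of $|x[i] \cap S_{\ell^*, h}|$ lies in $[k/4, k/2]$; by Chernoff-type concentration (valid under $O(\log(n+m))$-wise independence), a random seed places this quantity in $[1, k]$ with constant probability, and using $O(\log(n+m))$ seeds per scale makes the event hold simultaneously for all $i$ by a union bound. Whenever this event occurs, Prony's recovers $x[i] \cap S_{\ell^*, h}$ exactly; summing over favorable $(\ell, h)$ and deduplicating yields at least $\min(t, k)$ distinct witnesses per index. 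The main obstacle I expect is the derandomization: one must verify that the explicit hash family simultaneously covers all $\Omega(n+m)$ indices with polylogarithmically many seeds, and that the Chernoff-style deviation bounds on $|x[i] \cap S_{\ell, h}|$ remain valid under only limited independence---both of which are handled by standard but somewhat delicate arguments on polynomial hashing.
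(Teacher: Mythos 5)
Your power-sum / Prony's approach is a genuinely different route from what the paper relies on (the paper simply defers to the derandomized witness machinery of Alon--Naor and the $k$-reconstruction/peeling algorithm of Aumann et al.), and the randomized version of your scheme is plausible. However, the proposal has two real gaps that prevent it from establishing the \emph{deterministic} $O(k(n+m)\polylog(n+m))$ bound the lemma claims.

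First and most seriously, the derandomization is asserted, not performed. Your argument that ``a random seed places $|x[i]\cap S_{\ell^*,h}|$ in $[1,k]$ with constant probability, and using $O(\log(n+m))$ seeds per scale makes the event hold simultaneously for all $i$ by a union bound'' is an existence argument over a random choice of seeds; it does not give a procedure to \emph{find} a small collection of hash functions that works for the particular input $(a,b)$. The natural sample space for an $O(\log(n+m))$-wise independent family over $[0,n-1]$ has size $n^{\Theta(\log(n+m))}$, so exhaustive search is out, and a conditional-expectations or greedy hitting-set search would need to evaluate candidate seeds against all $\Theta(n+m)$ indices, which overshoots the stated time budget unless one uses a much more structured construction (this is precisely what Alon--Naor's small-bias-space construction buys you, and what you would need to invoke or reprove). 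Saying the issue is ``handled by standard but somewhat delicate arguments on polynomial hashing'' does not close this; limited-independence hashing gives concentration, not a deterministic selection procedure.

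Second, even under the randomized model, the claim that deduplicating over favorable $(\ell,h)$ yields $\min(t,k)$ distinct witnesses is not justified. At the scale $\ell^*$ you pick, the expected intersection size is $\Theta(k)$ but bounded by $k/2$; a single favorable hash therefore recovers at most roughly $k/2$ witnesses, and when $t$ is close to $k$ you are in a coupon-collector regime where $O(\log(n+m))$ hashes cover all $k$ only in expectation/whp, again a randomized argument that would itself need to be derandomized. You should either use several adjacent scales per $i$, or switch to the peeling strategy (repeatedly remove discovered witnesses from $a$ and re-run a single-witness finder $k$ times), which sidesteps the deduplication issue entirely and matches what the paper's citation to Aumann et al.\ is doing. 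A minor additional note: the monomials $j^s$ for $s\le 2k$ and $j\le n$ have $\Theta(k\log n)$ bits, so the FFTs must be carried out modulo a prime $p>n$ (after which Prony/Newton recovery over $\mathbb{F}_p$ still identifies the roots uniquely); this is routine but should be stated, since a naive integer FFT would already blow the time bound by a factor of $k$.
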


The lemma may be proved in a similar fashion as in \cite{alon1996derandomization}. Alternatively, we can convert it into a instance of $k$-reconstruction problem defined in \cite{aumann2011finding} and use that algorithm.

\begin{lemma} [Deterministic hitting sets] \label{lemma:hittingset}
    Given sets $S_1,S_2,\cdots S_u$ where for every $i$, $S_i\subseteq \{1,2,\cdots,n\}$, $|S_i|\ge R$. We can compute a ``hitting set'' $S$ of size $\le (n/R)\log(u)$ deterministically so that $S\cap S_i\ne \emptyset$ for all $i$, in $\tilde{O}((n+u)R)$ time.
\end{lemma}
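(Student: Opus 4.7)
The plan is to apply the classical greedy covering algorithm: build $S$ by repeatedly picking the element of $\{1,\ldots,n\}$ that hits the largest number of currently uncovered sets, stopping once everything is hit. For the size bound I would use the standard counting argument. At any moment when $u'$ sets remain uncovered, each still contains at least $R$ elements of $\{1,\ldots,n\}$, so the total number of (set, element) incidences among uncovered sets is at least $u'R$, and by pigeonhole some element lies in at least $u'R/n$ of them. Picking such an element leaves at most $u'(1 - R/n)$ uncovered sets, and after $k = \lceil (n/R)\ln u\rceil$ rounds the number of uncovered sets is at most $u e^{-kR/n} < 1$, giving $|S| \le (n/R)\log u$.

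For the running time the main difficulty is avoiding an $\Omega(nu)$-per-iteration naive implementation. I would first observe that the lemma is monotone in the sense that any hitting set of $R$-subsets of the $S_i$ also hits the $S_i$, so I can truncate each $S_i$ to any fixed subset of exactly $R$ elements. This reduces the working input bulk to $O(uR)$. I would then maintain an inverted index (for each element $e$, the list of sets containing it), an array $\mathrm{cnt}[e]$ recording the number of currently-uncovered sets containing $e$, and a bucket structure storing, for each possible count value in $[0,u]$, the elements of that current count. Picking the maximum is done by following a pointer into the highest non-empty bucket. When an element $e$ is chosen, I scan its inverted index, mark each newly covered set, and for every element $e'$ in that set decrement $\mathrm{cnt}[e']$ and move $e'$ one bucket down in $O(1)$. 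Each set is processed this way at most once, so the total decrement work is $O(uR)$.

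The main obstacle I anticipate is the amortized data-structure accounting: one has to argue that selecting the maximum, summed over many iterations and many decrements, never contributes more than the $O(u)$ pointer walk plus the $O(uR)$ decrement work. A general-purpose priority queue would add a logarithmic overhead and extra updates for every decrement, potentially blowing past the target runtime. The crucial structural observation is that the maximum count is monotonically non-increasing throughout the run, so the bucket pointer only ever moves downward and its total travel is $O(u)$ amortized. Combining this with the $O(n)$ bucket initialization and the $O(uR)$ inverted-index and decrement work yields total time $O(n + uR) \subseteq \tilde{O}((n+u)R)$, matching the claimed bound.
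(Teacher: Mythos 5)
Your proof is correct and is essentially the same bucket-based greedy set-cover argument that the paper invokes by citing Theorem 1 of Aingworth, Chekuri, Indyk and Motwani; truncating each $S_i$ to exactly $R$ elements and maintaining an inverted index with a monotonically descending bucket pointer are precisely the ingredients needed to extend their $n=u$ statement to general $n,u$ while staying within (in fact below) the stated $\tilde O((n+u)R)$ budget. No gaps.
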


The lemma is a natural extension of Theorem 1 in \cite{aingworth1999fast} (which addresses the $n=u$ case) and can be proved in the same way.

\begin{theorem} \label{thm:adaptiveminwit}
Given $\tilde{O}(1)$ $n$-length boolean convolutions, in $\tilde{O}(n)$ time deterministically, we can compute a permutation $\sigma$, and the minimum witness of the convolutions with respect to $\sigma$.
\end{theorem}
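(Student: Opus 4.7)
The plan is to construct $\sigma$ adaptively via a nested sequence of deterministic hitting sets, combining Lemma~\ref{lemma:minwitnessf} (enumeration of up to $k$ witnesses per position) with Lemma~\ref{lemma:hittingset}. Fix a threshold $R = \Theta(\log^2 n)$. Maintain a decreasing chain of candidate sets $[n] = A^{(0)} \supseteq A^{(1)} \supseteq \cdots$ and, at each round, declare some active pairs $(j, p)$ (convolution index $j$, output position $p$) as \emph{small} once we have enumerated all their witnesses inside the current $A^{(i)}$; the remaining \emph{large} pairs will have their $\sigma$-min witness pinned down by a later round.

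At round $i$, for each of the $\ell = \tilde O(1)$ input convolutions, restrict $a_j$ to $A^{(i)}$ and invoke Lemma~\ref{lemma:minwitnessf} with parameter $R$; this takes $\tilde O(\ell R n) = \tilde O(n)$ deterministic time and yields $\min(|x_j[p] \cap A^{(i)}|, R)$ distinct witnesses for every still-active pair. Pairs whose returned list has fewer than $R$ entries are declared small at round $i$ (their full witness set inside $A^{(i)}$ is now known). Pool the $R$-element sample sets of the remaining large pairs, across all $\ell$ convolutions, into a single instance of Lemma~\ref{lemma:hittingset}; in $\tilde O(nR) = \tilde O(n)$ time this produces a hitting set $H^{(i+1)} \subseteq A^{(i)}$ of size at most $(|A^{(i)}|/R) \log(\ell n) \le |A^{(i)}|/\log n$. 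Set $A^{(i+1)} := H^{(i+1)}$ and iterate. Because each round shrinks $|A^{(i)}|$ by a factor of $\log n$, the process terminates after $k = O(\log n / \log \log n) = \tilde O(1)$ rounds, either because no large pair survives or because $|A^{(k)}| \le R$, in which case the final application of Lemma~\ref{lemma:minwitnessf} returns the complete witness list of every surviving pair and they are declared small automatically.

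Define the output permutation by concatenation in order
\[
\sigma \;=\; \bigl(A^{(k)},\ A^{(k-1)}\setminus A^{(k)},\ \ldots,\ A^{(0)}\setminus A^{(1)}\bigr),
\]
with arbitrary ordering inside each ring, and precompute $\sigma^{-1}$. For correctness: for a pair declared small at round $i$ we possess the full intersection $x_j[p]\cap A^{(i)}$ explicitly, and since every element of $A^{(i)}$ precedes every element of $[n]\setminus A^{(i)}$ in $\sigma$, the true $\sigma$-minimum of $x_j[p]$ equals the $\sigma$-minimum of this known intersection, computable in $O(R)$ time via $\sigma^{-1}$. No large pair can ``lose'' all its witnesses during the refinement because $H^{(i+1)}$ hits the $R$-element sample of each large pair (hence $x_j[p]$ itself) and $A^{(i+1)} = H^{(i+1)}$, so every active pair enters round $i+1$ with a non-empty witness intersection.

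The main subtlety I expect is ensuring that a single common $\sigma$ works simultaneously for all $\ell$ convolutions despite their witness sets being over different supports; this is handled cleanly by pooling the samples from every convolution into one hitting-set call, which keeps both the hitting-set size and its $\tilde O(nR)$ computation cost under control thanks to $\ell = \tilde O(1)$. Summing the $\tilde O(n)$ work per round over $\tilde O(1)$ rounds, plus a final $\tilde O(nR) = \tilde O(n)$ sweep to read off min witnesses of all small pairs, yields the claimed $\tilde O(n)$ deterministic runtime.
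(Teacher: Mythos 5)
Your proposal is correct and follows the same adaptive hitting-set strategy as the paper's proof: in each round enumerate up to $R$ witnesses per active pair inside the current candidate prefix via Lemma~\ref{lemma:minwitnessf}, use Lemma~\ref{lemma:hittingset} on the pooled $R$-element samples to shrink that prefix while guaranteeing every still-active pair keeps a witness, and finalize a pair once its remaining witness set in the prefix is fully known. The only deviations are cosmetic parameter choices — you shrink by a $\log n$ factor per round with $R=\Theta(\log^2 n)$ (giving $O(\log n/\log\log n)$ rounds) and defer all $\sigma$-minimum lookups to a final $\sigma^{-1}$ sweep, whereas the paper halves the prefix each round with $R = 2\lceil\log|l|\rceil+O(1)$ and resolves a pair as soon as all its known witnesses fall in the just-fixed second half — and both yield $\tilde O(n)$ deterministic time.
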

\begin{proof}
    Assuming $n=2^\alpha$, we determine $\sigma$ and compute minimum witnesses in the following fashion: for $i$ from $\alpha$ down to $1$, determine first $2^{i-1}$ elements of $\sigma$ out of its first $2^i$ elements, and compute minimum witness for the results with no witness lying in the first $2^{i-1}$ elements.
    
    To find first $2^{i-1}$ elements out of first $2^i$ elements, suppose there are $t$ result elements (elements of the convolutions results), we consider only first $2^i$ elements and set $k=2\log(t)=\tilde{O}(1)$ in Lemma \ref{lemma:minwitnessf} to find at most $k$ witnesses out of the first $2^i$ elements for every result element.
    
    Consider all result elements with at least $k$ witnesses, we have found $k$ of these witnesses and with Lemma \ref{lemma:hittingset} we can compute a hitting set of size $\le 2^i\log(t)/k=2^{i-1}$. We then set the first $2^{i-1}$ elements to be this hitting set, then these results will all have witness in this first half. We pick any permutation for the remaining $2^{i-1}$ elements in $\sigma$. All results with $<k$ witnesses have all their witnesses computed, and we can compute minimum witness naively for those with no witnesses in the first half.
    
    As a result, we get algorithm \ref{alg:adaptminwitness}.
\end{proof}

\begin{algorithm} [H]
    \caption{Adaptive Minimum Witness} \label{alg:adaptminwitness}
    \begin{algorithmic}[1]
        \Procedure{AdaptiveMinimumWitness}{}
            \State Suppose we have $p$ convolutions of length $n$: $c_i=a_i\otimes b_i$ for $i\in[1,p]$
            \State Let $l=\{(i,j)\mid i\in [1,p],j\in[0,2n-2]\}$
            \For {$m \in [n,n/2,n/4,\cdots,1]$}
                \State $k=2\lceil \log(|l|)\rceil+5$
                \State For each $(p,q)$ in $l$, compute a set of $k$ different witnesses $w[p][q]\subseteq \sigma[0,\cdots,m/2-1]$ for $c[p][q]$ if possible\Comment{Lemma \ref{lemma:minwitnessf}}
                \State Find a hitting set $S\subseteq\sigma[0,\cdots,m-1]$ of size $m/2$ for all $w[p][q]$ of size $\ge k$\Comment{Lemma \ref{lemma:hittingset}}
                \State Permute $\sigma[0,\cdots,m-1]$ so $\sigma[0,\cdots,m/2-1]$ forms $S$
                \For {$(p,q)\in l$}
                    \If{$|w[p,q]|<k~\AND w[p,q]\subseteq \sigma[m/2,\cdots,m-1]$}
                        \State $c[p,q]\gets \min_\sigma w[p,q]$ \Comment{Minimum with respect to order $\sigma$}
                        \State Remove $(p,q)$ from $l$
                    \EndIf
                \EndFor
            \EndFor
        \EndProcedure
    \end{algorithmic}
\end{algorithm}

Combining Theorem \ref{thm:adaptiveminwit}, Lemma \ref{lemma:kernelcompidea}, \ref{lemma:propagate} and \ref{lemma:propagatemod}, we arrive at near-linear solutions for All-Target CoinChange and Residue Table.

\begin{theorem} \label{theo:coin}
    All-Target CoinChange can be solved in $O((u+t)\polylog(u))$ time deterministically.
\end{theorem}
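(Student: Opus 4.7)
The plan is to assemble the pieces already developed in the paper: Lemma~\ref{lemma:propagate} for the propagation phase, Algorithm~\ref{alg:kernelcompidea} for the kernel phase, and Theorem~\ref{thm:adaptiveminwit} for deterministic minimum-witness computation under an adaptively chosen order $\sigma$. Since CoinChange is the special case of Unbounded Knapsack in which every $p_i = -1$, Corollary~\ref{cor:logknap} bounds all kernel supports by $O(\log u)$, and Lemma~\ref{lemma:propagate} tells us that once we have $\sol(j,\sigma)$ at every $(2\log_2 u+1)$-kernel $j$, we can produce all optimal solutions in $O(t\log^2 u)$ time. So the whole problem reduces to computing these kernel solutions deterministically in $\tilde O(u)$ time.

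For the kernel phase, I would run Algorithm~\ref{alg:kernelcompidea} with two modifications: (i) replace $v,f$ by boolean arrays (marking feasibility rather than value), and (ii) replace the $(\max,+)$ convolution in the inner loop by a boolean convolution computed via FFT. Each such convolution is on arrays of length $O(u\log u)$ and thus runs in $\tilde O(u)$ time, and we only do $k=\lfloor 2\log_2 u+1 \rfloor$ rounds, for a total of $\tilde O(u)$. After $j$ rounds, $v[i]$ records exactly whether $i$ is a sum of at most $j$ item weights, so after $k$ rounds every $k$-kernel is marked feasible, and the proof of Lemma~\ref{lemma:kernelcompidea} carries through verbatim once we select the lexicographically smallest witness at each step.

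The crucial observation is that the boolean convolutions themselves are a function only of the weights, not of the lexical order; $\sigma$ is used only when we select a minimum witness from each convolution's output to build $sol[\cdot]$. I would therefore first execute all $O(\log u)$ boolean convolutions to get the feasibility arrays $v'_1,v'_2,\ldots,v'_k$, and only then invoke Theorem~\ref{thm:adaptiveminwit} on these $\tilde O(1)$ boolean convolutions \emph{simultaneously}. It returns a single permutation $\sigma$ together with minimum witnesses under $\sigma$ at every nonzero position of every $v'_j$, all in $\tilde O(u)$ deterministic time. Using these witnesses I can then reconstruct $sol[\cdot]$ exactly as in Algorithm~\ref{alg:kernelcompidea} (assuming WLOG that weights are distinct, which can be enforced by a linear-time deduplication pass since two items with the same weight in CoinChange are interchangeable).

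Putting it together: $\tilde O(u)$ for boolean convolutions, $\tilde O(u)$ for adaptive min-witness and kernel reconstruction, and $O(t\log^2 u)$ for Lemma~\ref{lemma:propagate}, giving the claimed $O((u+t)\polylog u)$. The one subtlety I expect to need care with is the interface between Theorem~\ref{thm:adaptiveminwit}, whose $\sigma$ permutes array positions in $[0,u)$, and Algorithm~\ref{alg:kernelcompidea}, whose $\sigma$ is a lexical order over item indices. With distinct weights this correspondence is just the identification of an item with its unique weight position; otherwise deduplication removes the issue. This is the main place where the argument is slightly finicky, but it is not a real obstacle — the hard combinatorial work has already been done in Theorem~\ref{thm:adaptiveminwit} and Lemma~\ref{lemma:propagate}.
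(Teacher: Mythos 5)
Your proposal is correct and follows essentially the same route as the paper, which in its actual proof just says to swap the witness-finding step in Algorithm~\ref{alg:kernelcompidea} for Algorithm~\ref{alg:adaptminwitness} and then propagate via Lemma~\ref{lemma:propagate}. You spell out the one detail the paper elides — that the boolean feasibility arrays are independent of $\sigma$, so all $O(\log u)$ convolutions can be precomputed and handed to the adaptive min-witness routine simultaneously, with the identification between array positions and (deduplicated) item indices handled afterward — which is a helpful clarification but not a different argument.
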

\begin{proof}
    Replace witness finding in Algorithm \ref{alg:kernelcompidea} with Algorithm \ref{alg:adaptminwitness}, we can compute lexicographically minimal solutions for all $(2\log_2 u+1)$-kernels. We then propagate with Lemma \ref{lemma:propagate}.
\end{proof}

\begin{theorem} \label{theo:restable}
    Residue Table can be computed in $O(u\polylog(u))$ time deterministically.
\end{theorem}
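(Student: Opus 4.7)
The plan is to mirror the strategy used in the proof of Theorem \ref{theo:coin}, but terminate with the modular propagation routine of Lemma \ref{lemma:propagatemod} rather than the linear one of Lemma \ref{lemma:propagate}. By Corollary \ref{cor:logss}, for the Unbounded SubsetSum structure underlying Residue Table, every lexicographically smallest solution has support of size at most $\log_2 u + 1$, so it suffices to compute $\sol(j,\sigma)$ for every $\lceil \log_2(u+1)\rceil$-kernel $j$, and these kernels all lie in $[0, O(u\log u)]$.

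First I would run a boolean-convolution variant of Algorithm \ref{alg:kernelcompidea}: initialize $f$ as the indicator array of $\{w_1,\dots,w_n\}$ on $[0,u]$, and iteratively boolean-convolve with a running feasibility array $v$ on $[0, O(u\log u)]$ for $\lceil \log_2(u+1)\rceil$ rounds. Each round must produce not only the updated feasibility vector but also, for every newly feasible index, the lexicographically smallest witness with respect to $\sigma$. To avoid the $\tilde{O}(u^{1.5})$ cost of generic minimum-witness algorithms, I would feed the $O(\log u)$ boolean convolutions generated across all rounds into Algorithm \ref{alg:adaptminwitness} (Theorem \ref{thm:adaptiveminwit}), which simultaneously chooses $\sigma$ and reports minimum witnesses for $\tilde{O}(1)$-many length-$O(u\log u)$ boolean convolutions in $\tilde{O}(u)$ deterministic time. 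Tracing these witnesses back through $sol[\cdot]$ exactly as in Algorithm \ref{alg:kernelcompidea} yields $\sol(j,\sigma)$ for every $\lceil\log_2(u+1)\rceil$-kernel $j$, with the same correctness argument as Lemma \ref{lemma:kernelcompidea}.

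Once the kernels are known, I would invoke Lemma \ref{lemma:propagatemod} to propagate these solutions modulo $w_1$ in $O(u\log^2 u)$ time using a Fibonacci-heap-based Dijkstra-style loop, which by the modular optimal substructure property (Lemma \ref{lemma:substructurem}) reaches every residue class whose minimum feasible representative has support size exceeding $\lceil\log_2(u+1)\rceil$. Adding the kernel-computation cost $\tilde{O}(u)$ to the propagation cost $\tilde{O}(u)$ gives the claimed $O(u\polylog u)$ deterministic running time.

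The step I expect to require the most care is bookkeeping around the adaptive-ordering interface: Algorithm \ref{alg:adaptminwitness} chooses $\sigma$ on the fly while processing all convolutions together, but Algorithm \ref{alg:kernelcompidea} issues its boolean convolutions sequentially with $v$ changing between rounds. I would therefore either (i) batch the $O(\log u)$ rounds so that all of them are handed to Algorithm \ref{alg:adaptminwitness} simultaneously, using the fact that each $v$ is a deterministic function of $f$ and $\sigma$'s restriction already fixed in earlier rounds, or (ii) interleave the adaptive witness procedure with the convolution rounds, proving inductively that the partial order on $\sigma$ fixed so far is consistent with the lexicographically smallest solutions discovered. Either way, the total number of convolutions remains $\tilde{O}(1)$-batched per invocation of Theorem \ref{thm:adaptiveminwit}, preserving the $\tilde{O}(u)$ bound.
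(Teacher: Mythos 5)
Your proposal matches the paper's own proof, which is just one sentence: replace the witness-finding in Algorithm \ref{alg:kernelcompidea} with Algorithm \ref{alg:adaptminwitness} and then propagate with Lemma \ref{lemma:propagatemod}. The concern you raise at the end about the interface between the sequential rounds of Algorithm \ref{alg:kernelcompidea} and the all-at-once batching in Algorithm \ref{alg:adaptminwitness} is a reasonable one to flag, but your option (i) resolves it cleanly and for a slightly simpler reason than you state: in the boolean case the feasibility arrays $v^{(1)},\dots,v^{(k)}$ are determined by $f$ alone and do not depend on $\sigma$ at all (only the recorded witnesses do), so all $O(\log u)$ convolutions can be precomputed and handed to Theorem \ref{thm:adaptiveminwit} in a single batch before any ordering decisions are made.
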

\begin{proof}
    Replace witness finding in Algorithm \ref{alg:kernelcompidea} with Algorithm \ref{alg:adaptminwitness} and propagate with Lemma \ref{lemma:propagatemod}.
\end{proof}

\section{Conclusion}

We presented new combinatorial insights and near-optimal algorithms to three generalizations of Unbounded SubsetSum. Our insights and techniques are of independent interest and can also apply to other generalizations for Unbounded SubsetSum. 

\section*{Acknowledgment}

\bibliographystyle{alphaurl}
\bibliography{main}

\newpage
\appendix


\end{document}